\newtheorem{theorem}{Theorem}[section]
\newtheorem{lemma}[theorem]{Lemma}
\newtheorem{example}[theorem]{Example}
\newtheorem{definition}[theorem]{Definition}
\newtheorem{corollary}[theorem]{Corollary}
\newenvironment{sproof}{%
  \proof}{\endproof}
\newcommand*{\q}[1]{\textcolor{red}{#1}}
\newcommand*{\ye}[1]{\textcolor{purple}{#1}}
\newcommand*{\rv}[1]{\textcolor{black}{#1}}
  \providecommand\BibTeX{{%
    \normalfont B\kern-0.5em{\scshape i\kern-0.25em b}\kern-0.8em\TeX}}}
\begin{document}
\fancyhead{}

%%
%% The "title" command has an optional parameter,
%% allowing the author to define a "short title" to be used in page headers.
\title{Query-by-Sketch: Scaling \rv{Shortest Path Graph} Queries \\on Very Large Networks}

%%
%% The "author" command and its associated commands are used to define
%% the authors and their affiliations.
%% Of note is the shared affiliation of the first two authors, and the
%% "authornote" and "authornotemark" commands
%% used to denote shared contribution to the research.
%\author{Ye Wang, Qing Wang, Henning Koehler, Yu Lin}
%\email{trovato@corporation.com}
%\author{G.K.M. Tobin}
%\email{webmaster@marysville-ohio.com}
%\affiliation{
%  \institution{Institute for Clarity in Documentation}
%  \streetaddress{P.O. Box 1212}
%  \city{Dublin}
%}

\author{Ye Wang}
\affiliation{School of Computing, Australian National University}
\email{ye.wang2@anu.edu.au}
\author{Qing Wang}
\affiliation{School of Computing, Australian National University}
\email{qing.wang@anu.edu.au}
\author{Henning Koehler}
\affiliation{School of Fundamental Sciences, Massey University}
\email{h.koehler@massey.ac.nz}
\author{Yu Lin}
\affiliation{School of Computing, Australian National University}
\email{yu.lin@anu.edu.au}

%%
%% By default, the full list of authors will be used in the page
%% headers. Often, this list is too long, and will overlap
%% other information printed in the page headers. This command allows
%% the author to define a more concise list
%% of authors' names for this purpose.
\renewcommand{\shortauthors}{}
\renewcommand{\shorttitle}{}

\begin{abstract}
Computing shortest paths is a fundamental operation in processing graph data. In many real-world applications, discovering shortest paths between two vertices empowers us to make full use of the underlying structure to understand how vertices are related in a graph, e.g. the strength of social ties between
individuals in a social network. In this paper, we study the shortest-path-graph problem that aims to efficiently compute a shortest path graph containing exactly all shortest paths between any arbitrary pair of vertices on complex networks. Our goal is to design an exact solution that can scale to graphs with millions or billions of vertices and edges. To achieve high scalability, we propose a novel method, \emph{Query-by-Sketch} (QbS), which efficiently leverages offline labelling (i.e., precomputed labels) to guide online searching through a fast sketching process that summarizes the important structural aspects of shortest paths in answering shortest-path-graph queries. We theoretically prove the correctness of this method and analyze its computational complexity. To empirically verify the efficiency of QbS, we conduct experiments on 12 real-world datasets, among which the largest dataset has 1.7 billion vertices and 7.8 billion edges. The experimental results show that QbS can answer shortest-path-graph queries in microseconds for million-scale graphs and less than half a second for billion-scale graphs. %; in the meantime, the labelling sizes are generally smaller than the size of original graphs.
 
\end{abstract}

%%
%% The code below is generated by the tool at http://dl.acm.org/ccs.cfm.
%% Please copy and paste the code instead of the example below.
%%

\begin{CCSXML}
<ccs2012>
<concept>
<concept_id>10003752.10003809.10003635.10010037</concept_id>
<concept_desc>Theory of computation~Shortest paths</concept_desc>
<concept_significance>500</concept_significance>
</concept>
</ccs2012>
\end{CCSXML}

\ccsdesc[500]{Theory of computation~Shortest paths}

%%
%% Keywords. The author(s) should pick words that accurately describe
%% the work being presented. Separate the keywords with commas.
\keywords{Shortest paths; graphs; 2-hop cover; distance labelling; pruned landmark labelling; graph sketch; breadth-first search; algorithms}

\maketitle

\section{Introduction}

\label{sec:introduction}

Graphs are typical data structures used for representing complex \rv{relationships} among entities, such as friendships in social networks, connections in computer networks, and links among web pages \cite{scott1988social,boccaletti2006complex,ukkonen2008searching}. Computing shortest paths between vertices is a fundamental operation in processing graph data, and has been used in many algorithms for graph analytics \cite{yao2013secure,opsahl2010node,kolaczyk2009group}. These algorithms are often applied to support applications that require low latency on graphs with millions or billions of vertices and edges. Therefore, it is highly desirable -- but challenging -- to compute shortest paths efficiently on very large graphs. 

Previously, the problem of point-to-point shortest path queries has been well studied, which is to find a shortest path between two vertices in a graph \cite{goldberg2005computing,goldberg2006reach,bast2007transit,goldberg2007point,wagner2007speed,abraham2010highway,wu2012shortest,sankaranarayanan2009path,sanders2005highway}. By leveraging specific properties of road networks, such as hierarchical structures and near planarity \cite{fu2013label,akiba2013fast}, previous works have proposed various exact and approximate methods for answering point-to-point shortest path queries \cite{cowen2004compact,abraham2012hierarchical}. Nonetheless, these methods often do not perform well on complex networks (e.g., social networks, and web graphs) because complex networks exhibit different properties from road networks, such as small diameter and local clustering \cite{goldberg2005computing,fu2013label,akiba2013fast}. Furthermore, existing methods for point-to-point shortest path queries were designed with the guarantee of finding only one shortest path, which limits their usability in practical applications.

\begin{figure}[ht!]
    \centering
    \includegraphics[width=5.5cm]{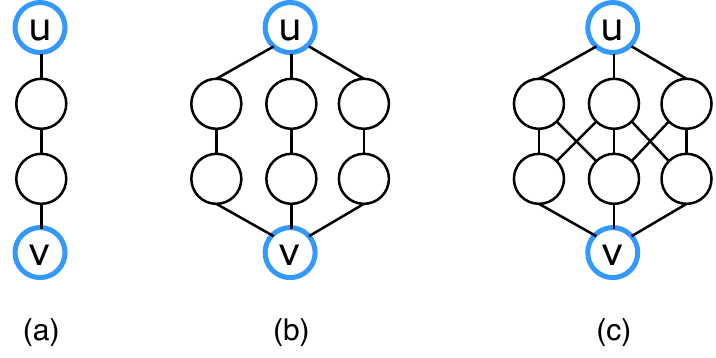}\vspace{-0.2cm}
    \caption{An illustration of shortest paths between two vertices $u$ and $v$ whose distance is $3$: (a) one shortest path; (b) three shortest paths; (c) \rv{seven shortest paths.} }
    \label{fig:int}\vspace{-0.3cm}
\end{figure}

\begin{figure*}[ht!]
\centering
\includegraphics[width=17cm]{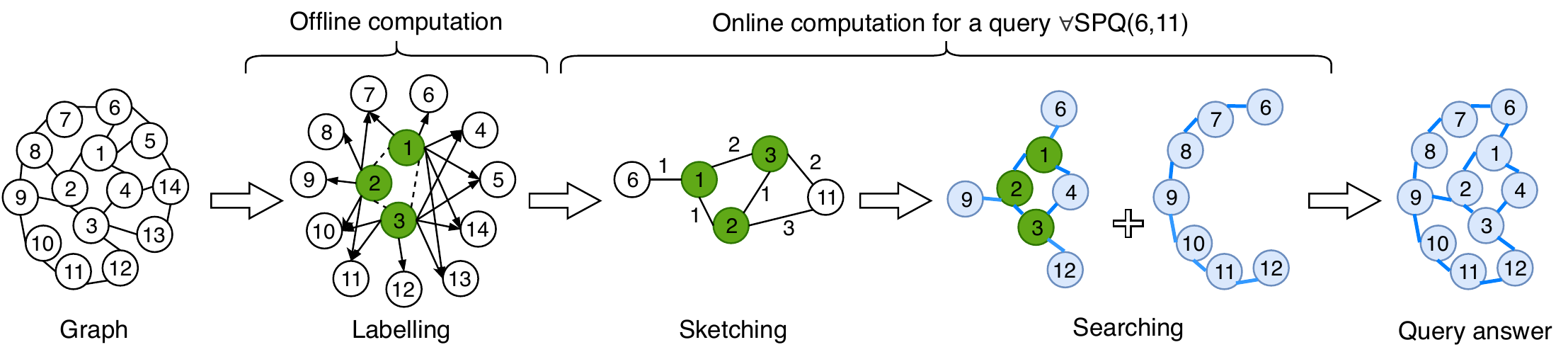}\vspace{-0.2cm}
\caption{An illustration of our method Query-by-Sketch (QbS) for answering all shortest path queries.}\vspace{-0.3cm}
\label{fig:frame}
\end{figure*}
\rv{Given two vertices $u$ and $v$, as depicted in Figure \ref{fig:int}(a)-(c), they have the same distance and cannot be distinguished from one another if only one shortest path is considered. However, when considering all shortest paths, the shortest paths between these two vertices indeed exhibit considerably different structures in Figure \ref{fig:int}(a)-(c), which can not only distinguish vertices $u$ and $v$ in different scenarios, but also empower us to make full use of such structures to analyze how they are connected. }\rv{Thus, in this paper, we study the problem of finding the structure of shortest paths between vertices. Specifically, we use the notion of ``shortest path graph" to represent the structure of shortest paths between two vertices, which is a subgraph containing \emph{exactly all shortest paths} between these two vertices. Accordingly, we term this problem as the \emph{shortest-path-graph} problem (formally defined in Section \ref{sec:prelimiaries}).}

\rv{Interestingly, shortest path graph manifests itself as a basis for tackling various shortest path related problems, particularly when investigating the structure of the solution space of a combinatorial problem based on shortest paths, 
for example, the Shortest Path Rerouting problem (i.e., to find a rerouting sequence from one shortest path to another shortest path that only differs in one vertex) \cite{kaminski2011shortest,bonsma2013complexity,nishimura2018introduction}, the Shortest Path Network Interdiction problem (i.e., to find critical edges and vertices whose removal can destroy all shortest paths between two vertices) \cite{khachiyan2008short,israeli2002shortest}, and the variants such as the Shortest Path Common Links problem (i.e., to find links common to all shortest paths between two vertices) \cite{labbe1995location,hansen1986efficient}. These shortest path related problems are motivated by a wide range of real-world applications arising in designing and analyzing networks. For example, identifying a rerouting sequence for shortest paths enables the robust design of networks with minimal cost for reconfiguration, and finding critical edges and vertices helps defend critical infrastructures against cyberattacks.}

\begin{comment}
\rv{Computing shortest path graphs is also useful to analyze the relevance of vertices in many real-world applications.} Below, we present two application scenarios.
\begin{itemize}
    \item \textbf{Collaboration strength.~} To analyze how researchers collaborate with each other, the chains of contacts in collaboration networks between researchers can be discovered by computing their shortest paths. This can manifest meaningful connections. Researchers can find important collaborators through whom the contacts are established and make new professional contacts \cite{krichel2006social, newman2001scientific}. %Researchers who coauthored papers are likely personal acquainted and work in related field. 
    %Shortest paths manifest every possible connection and indicate influence researcher (i.e., the researcher with a higher betweenness centrality). Also, s
    \item \textbf{Countering terrorism.~} In criminal networks, by analyzing the shortest paths between two criminals, a criminal investigator can find interrelated criminals who handle the supply chain or exchange message between others. Identifying such interrelated criminals helps identify the links with organised crime and disrupt their organized events \cite{xu2004fighting}. 

\end{itemize}
\end{comment}

\rv{However, computing shortest path graphs is computationally expensive since it requires to identify all shortest paths, not just one, between two vertices.} A straightforward solution for answering \rv{shortest-path-graph queries} is to compute on-the-fly all shortest paths between two vertices using Dijkstra algorithm for weighted graphs {\cite{dijkstra1959note}} or performing a breadth-first search (BFS) for unweighted graphs \cite{cormen2009introduction}. % which has time complexity \y{$O(|E|)$} on a graph with a vertex set $V$ and an edge set $E$. %\cite{cormen2009introduction}. 
This is costly on graphs with millions or billions of vertices and edges.
\rv{Another solution is to precompute all shortest paths for all pairs of vertices in a graph and then assign precomputed labels to vertices such that certain properties hold, e.g. 2-hop distance cover \cite{cohen2003reachability}.} However, for large graphs, storing even just shortest path distances of all pairs of vertices is prohibitive \cite{akiba2013fast} and storing all shortest paths of all pairs is hardly feasible due to the demand for much more space overhead.  
Thus, the question we tackle in this paper is: \emph{How to construct labels for \rv{shortest-path-graph queries} that should be of reasonable size (e.g. not much larger than the original graph), within a reasonable time (e.g. not longer than one day), and can speed up query answering as much as possible?}
%A common technique to speed up query answering is to use ``distributed" labels by assigning labels to vertices such that a query on shortest paths between vertices $u$ and $v$ can be answered by the labels of $u$ and $v$.
%A key idea used for speeding up the computation of point-to-point shortest path is to precompute labelling from a graph, and then use the labelling to find one shortest path without traversing the whole graph. 
%These methods try to reduce the labelling size and the labelling may not support the discovery of all shortest paths. On the other hand, appropriate labelling should contain enough information for fast computation while maintaining as small as possible. Thus it should find a trade-off between two straight forward solution. 
In answering this question, we develop an efficient solution for \rv{ shortest-path-graph queries.} It is worth to note that: \rv{1) we do not enumerate all shortest paths to produce a shortest path graph that contains \emph{exactly all shortest paths} between two vertices; 2) our proposed solution can answer \rv{shortest-path-graph queries} very efficiently, in microseconds for graphs with millions of edges and in less than half a second for graphs with billions of edges.} 

 \vspace{0.2cm} 
\noindent\textbf{Contributions.~} In the following, we summarize the contributions of this paper with the key technical details:

 \vspace{0.1cm} 
   \noindent(1) We observe that 2-hop distance cover is inadequate for labelling required by \rv{shortest-path-graph} queries. %To alleviate the limitations of 2-hop distance cover, we propose a new property called \emph{2-hop path cover}. %Based on 2-hop path cover, we develop two labelling-based approaches for all-shortest-path queries: \emph{pruned path labelling} (PPL) and \emph{pruned path labelling with parents} (ParentPPL). These methods can answer all-shortest-path queries through traversing labels that are related to shortest paths in answers.
    To alleviate this limitation and achieve high scalability, we propose a scalable method for answering \rv{shortest-path-graph} queries, called \emph{Query-by-Sketch} (QbS).
    This method consists of three phases, as illustrated in Figure \ref{fig:frame}: \rv{(a)} \emph{labelling} - constructing a labelling scheme, which is compact and of a small size, using a small number of landmarks through precomputation, \rv{(b)} \emph{sketching} - using labelling to efficiently compute a \emph{sketch} that summarizes the important structure of shortest paths in a query answer, and (c) \emph{searching} - \rv{computing shortest paths on a sparsified graph} under the "guide" of the sketch. We develop efficient algorithms for these phases, and combine them effectively \rv{to handle shortest-path-graph queries} on very large graphs.

\smallskip    
   \noindent(2)     
    We theoretically prove the correctness of our method $QbS$. In addition to this, we conduct the complexity analysis for $QbS$ through analysing the time complexities of the algorithms for constructing a labelling scheme, computing a sketch, and performing a guided search for answering queries. We also prove that our labelling scheme is deterministic w.r.t. landmarks. This enables us to leverage the thread-level parallelism by performing BFSs from different landmarks simultaneously without considering an order of landmarks, which improves the efficiency of labelling construction and thus achieves better scalability.

\smallskip    
   \noindent(3)     
    We have conducted experiments on 12 real-world datasets, among which the largest dataset ClueWeb09 has 1.7 billion vertices and 7.8 billion edges. It is shown that $QbS$ has significantly better scalability than the baseline methods. The labelling construction of $QbS$ can be  
   parallelized, which takes 10 seconds for datasets with millions of edges and half an hour for the largest dataset ClueWeb09. The labelling sizes constructed by $QbS$ are generally smaller than the original sizes of graphs. Further, $QbS$ can answer queries much faster than the other methods. For graphs with billions of edges, it takes only around 0.01 - 0.5 seconds to answer a query.

%\vspace{0.1cm}
%\noindent\textbf{Outline. }
%The rest of the paper is organized as follows. 
%In Section \ref{sec:prelimiaries}, we present the preliminaries and problem definition. Section \ref{sec:labelling-based-methods} discusses several labelling-based methods and their limitations. Our Query-by-Sketch method is introduced in Section \ref{sec:qbs}. Section \ref{sec:proof} presents the correctness proof and complexity analysis. Section \ref{sec:experiments} discusses the experimental results. We review the related work in Section \ref{sec:related-work} and conclude the paper in Section \ref{sec:conclusion}.

%\newpage
\section{Preliminaries}\label{sec:prelimiaries}

\begin{comment}
\begin{table}
\centering
\caption{Frequent used notations. }
\begin{tabular}{l|l}
\toprule
Notation    & Description  \\
\midrule
$G=(V,E)$   & A graph     \\ 
$d_G(u,v)$  & Distance between $u$ and $v$ in $G$ \\
$R$         & Set of landmarks\\
$L(v)$      & Label of $v$\\
$V(G)$      & Set of vertices in G\\
$E(G)$      & Set of edges in G \\
\multirow{2}{*}{$P^G_{uv}$}  & Set of all shortest paths between \\
                             & $u$ and $v$ in $G$ \\
{$G_{uv}$ }   & Shortest path graph between $u$ and $v$ \\
\bottomrule
\end{tabular}
\label{tab:notations}
\end{table}
\end{comment}

Let $G=(V,E)$ be an unweighted graph, where $V$ and $E$ represent the set of vertices and edges in $G$, respectively. %The number of vertices and the number of edges of $G$ are denoted by $n=|V|$ and $m=|E|$, respectively. 
Without loss of generality, we assume that $G$ is undirected and connected since our work can be easily extended to directed or 
disconnected graphs. We use $V(G)$ and $E(G)$ to refer to the set of vertices and edges in $G$, respectively, $P_{uv}$ the set of all shortest paths between $u$ and $v$, and $d_G(u,v)$ the shortest path distance between $u$ and $v$ in $G$.

\medskip
\noindent\textbf{Distance labelling. }Let $R\subseteq V$ be a subset of special vertices in $G$, called \textit{landmarks}.
For each vertex $v\in V$, the \emph{label} of $v$ is a set of \emph{labelling entries} $L(v)=\{(r_1,\delta_{vr_1}), \dots, $ $(r_n,\delta_{vr_n})\}$, where $r_i\in R$ and $\delta_{vr_i}=d_G(v,r_i)$. We call $L=\{L(v)\}_{v\in V}$ a \textit{labelling} over $G$. The \emph{size} of a labelling $L$ is defined as \textit{size}(L)=$\Sigma_{v\in V}|L(v)|$. In viewing that each labelling entry $(r_i,\delta_{vr_i})$ corresponds to a \emph{hop} from a vertex $v$ to a landmark $r_i$ with the distance  $\delta_{vr_i}$, Cohen \emph{et al}. \cite{cohen2003reachability} proposed \emph{2-hop distance cover}, which has been widely used in labelling-based approaches for distance queries.
\begin{definition}\label{def:2-hop-distance-cover}{\emph{[\textsc{2-hop distance cover}]}}
A labelling $L$ over a graph $G=(V,E)$ is a \emph{2-hop distance cover} iff, for any two vertices $u,v\in V$, the following holds: 
\begin{displaymath}
d_G(u,v)=min\{\delta_{ur}+\delta_{vr}|(r,\delta_{ur})\in L(u),(r,\delta_{vr})\in L(v)\}.
\end{displaymath}
\end{definition}

\rv{Informally, 2-hop distance cover requires that, for any two vertices in a graph, their labels must contain at least one common landmark $r$ that lies on one of their shortest paths.}
\begin{comment}
A landmark $r\in R$ is said to $cover$ a shortest path $p$ between $u$ and $v$ if it lies in the shortest path, i.e., $r\in V(p)$. Given a set $P^G_{uv}$ of shortest paths between $u$ and $v$, a subset of landmarks $R'\subseteq R$ is said to \emph{cover} $P^G_{uv}$ if $R'$ is minimal and, for each $p\in P^G_{uv}$, the following condition is satisfied:
\[
    R'\cap V(p) \neq \phi, R'\cap V(p)\in 
\begin{cases}
    \{u,v\}& \text{if } d_G(u,v)=1,\\
    R\backslash\{u,v\} & \text{otherwise}.
\end{cases}
\]
\end{comment}

\medskip
\noindent\textbf{\rv{Shortest-path-graph problem. }}\rv{In this work, we study shortest-path-graph queries. We first define the notion of \emph{shortest path graph}.}

\begin{definition}{\emph{[\textsc{Shortest path graph}]}}
Given any two vertices $u$ and $v$ in a graph $G$, the \textit{shortest~path~graph} (SPG) between $u$ and $v$ is a subgraph $G_{uv}$ of $G$, where (1) $V(G_{uv})=\bigcup_{p\in P_{uv}} V(p)$ and (2) $E(G_{uv})=\bigcup_{p\in P_{uv}} E(p)$.
\end{definition}

A shortest path graph $G_{uv}$ is different from an induced subgraph $G[V']$ where $V'=\bigcup_{p\in P_{uv}} V(p)$. Every edge in $G_{uv}$ must lie on at least one shortest path between $u$ and $v$, whereas $G[V']$ may contain edges that do not lie on any shortest path between $u$ and $v$. %, as shown in Example \ref{exa:spg}.

\begin{definition}{\emph{[\textsc{\rv{Shortest-path-graph problem}}]}} 
Let $G=(V,E)$ and $u,v\in V$. Then the \rv{\emph{shortest-path-graph problem}} is, given a query \rv{$SPG(u,v)$}, to find the shortest path graph $G_{uv}$ over $G$.
\end{definition}

\section{Shortest Path Labelling}\label{sec:labelling-based-methods}

In this section, we discuss several labelling-based methods for the shortest-path-graph problem. The purpose is to discuss  their limitations and possible sources of difficulties.  

\subsection{2-Hop Path Cover}
Originally, 2-hop distance cover was proposed for reachability and distance queries \cite{cohen2003reachability}. %this property has also been widely used in labelling-based approaches for the one-shortest-path problem.  
Below, we discuss why it is insufficient for shortest-path-graph queries.

\begin{example}
%Consider a query $\forall SPQ(3,7)$ on the graph $G$ depicted in Figure \ref{fig:new_ex}(a). The query answer is colored in green. A 2-hop distance cover over $G$ is colored in black in Figure \ref{fig:new_ex}(b).
%We can see that, starting from the labels of vertices $3$ and $7$, vertex $1$ can be found in the shortest path graph since $d_G(3,7)=\delta_{3,1}+\delta_{7,1}=4$. However, \q{we cannot find vertex $2$ and $4$ using the labels of $1$ and $3$, because we only have $d_G(1,3)=\delta_{1,1}+\delta_{3,1}$ (your labels look not quite right, e.g. there is no any black labels for vertex 1)}
Consider a query \rv{$SPG(3,7)$} on a graph $G$ depicted in Figure \ref{fig:new_ex} (a). The query answer is colored in green. In Figure \ref{fig:new_ex}(b), labels of a 2-hop distance cover over $G$ are colored in black. \rv{Starting from vertices $3$ and $7$, we can find vertex $1$ because $(1,1)\in L(3)$ and $(1,3)\in L(7)$, $d_G(3,7)=1+3=4$. Then, we have to stop since the label of vertex $1$ does not contain entries to other vertices. Thus, using the labels of the 2-hop distance cover can compute only one shortest path between $3$ and $7$, failing to find vertices $2$, $4$ and $5$ in the answer.} 
\end{example}

\rv{Finding a shortest path graph that exactly contains all shortest paths between two vertices} requires us to accurately encode \emph{every} shortest path between two vertices into labels. %However, 2-hop distance cover, $L(u)$ and $L(v)$ only ensures us to find one vertex $r$ among all vertices on shortest paths including $u$ and $v$. %To know that whether another $r'$ is on the shortest path, $L(r')$ is required to compute $d_G(u,r')+d_G(r',v)$. } 
Thus, to answer \rv{shortest-path-graph} queries, we generalize 2-hop distance cover to a property called \emph{2-hop path cover}.

 \begin{comment}
 \begin{figure}
    \centering
    \includegraphics[width = 8cm]{}
    \caption{An example of using pruned landmark labelling which satisifies 2-hop distance cover to find shortest paths. (a) is the ordinary graph; (b) shows part labelling information in the labelling order $<1,2,3\dots>$ and (c) is the the shortest path information we can get from the labels of $1$ and $3$. }
    \label{fig:2-hop}
\end{figure}
 \end{comment}

\begin{definition}\label{def:2-hop-path-cover}{\emph{[\textsc{2-hop path cover}]}}
Let $G=(V,E)$ be a graph and $L$ a labelling over $G$. We say $L$ is a \emph{2-hop path cover} iff $L$ is a 2-hop distance cover and, for any two vertices $u,v\in V$ and any path $p\in P^G_{uv}$ with $p\neq (u,v)$, the following holds:
\begin{align}\label{equ:2hop-path-cover}
\begin{split}
d_G(u,v)=min\{\delta_{ur}+\delta_{vr}| (r,\delta_{ur})\in L(u),\\  (r,\delta_{vr})\in L(v), r\in V(p)\backslash\{u,v\}\},
\end{split}
\end{align}

\end{definition}
%Such a labelling $L$ is called a \emph{2-hop path labelling}. 
Compared with 2-hop distance cover, 2-hop path cover further requires that, for any shortest path $p$ between any two vertices $u$ and $v$ that contains more than one edge, the labels of $u$ and $v$ should contain a common landmark $r$ that lies on $p$, but not be $u$ or $v$.

%By Definitions \ref{def:2-hop-distance-cover} and \ref{def:2-hop-path-cover}, we have the following fact.

%\begin{fact}
% If $L$ is a 2-hop path cover, then $L$ must be a 2-hop distance cover; but not vice versa.
%\end{fact}

\begin{figure}
    \centering
    \includegraphics[width=8cm]{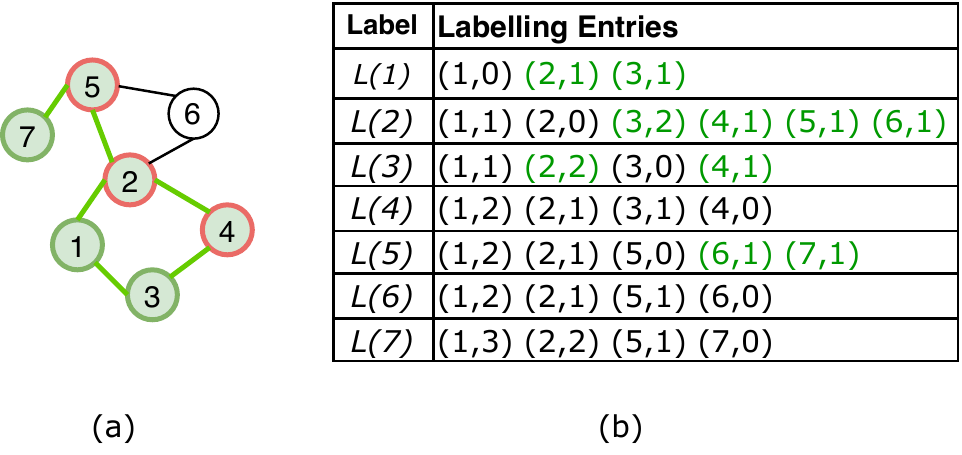}\vspace{-0.3cm}
    \caption{(a) A graph $G$ in which the answer of $\forall SPQ(3,7)$ is colored in green; (b) Labels over $G$, where labels for a 2-hop distance cover are colored in black and additional labels from a 2-hop path cover are colored in green.
    }
    \label{fig:new_ex}\vspace{-0.3cm}
\end{figure}

\begin{example}
%Consider Figure \ref{fig:new_ex} again, in which a 2-hop path cover contains labels colored both in black and in green. According to labels of vertices $1$ and $3$, $d_G(1,3)=\delta_{1,2}+\delta_{3,2}$ and $d_G(1,3)=\delta_{1,4}+\delta_{3,4}$. Thus, we can find both vertex $2$ and vertex $4$ in the query answer.

Consider Figure \ref{fig:new_ex} again, in which a 2-hop path cover contains labels colored both in black and in green. According to the labels of vertices $1$ and $7$, vertex $2$ can be found. Then by the labels of $2$ and $7$, we can further find vertex $5$. Similarly, vertex $4$ can be found through the labels of $2$ and $3$. Thus, using the labels of the 2-hop path cover, we can find the query answer for \rv{$SPG(3,7)$}. 
\end{example}

%\begin{example}\label{exa:spg}
%Consider a graph depicted in Figure \ref{fig:SPG}(a) where vertices and edges of the shortest path graph between two vertices $6$ and $11$ are colored in blue. The vertices $2$ and $8$ are both included, whereas the edge $(2,8)$ is not included because $(2,8)$ is not on any shortest path between $6$ and $11$.
%\end{example}
%\begin{example}\label{exa:2-hop-cover}
% Suppose that we have a 2-hop distance labelling $L$ over the graph depicted in Figure \ref{fig:SPG}.(a), in which $L(6)$ and $L(11)$ contain \emph{only one common landmark} 2 (i.e., $(2,2)\in L(6)$ and $(2,3)\in L(2)$. Although we can obtain $d_G(6,11)=2+3=5$, we cannot find the shortest path $(6,7,8,9,10)$ that does not go through $2$.
% \end{example}

\begin{comment}
\begin{figure}
    \centering
    \includegraphics[width=8cm]{}
    \caption{(a) A graph $G$ in which the answer of $\forall$SPQ(6,11) on $G$ is highlighted in blue; (b) and (c) an illustration of computing the answer using PPL, \ye{which is a 2-hop path cover}.\discuss{So this figure appears in section 3 but we did not explain half of it until section 4. I add explanation of (b)(c) in section 4. }}
    \label{fig:SPG}
\end{figure}
\end{comment}

%The following lemma states that, for any query $\forall SPQ(u,v)$,  the shortest path graph in its answer can be constructed using a 2-hop path cover $L$ on $G$.
%\begin{lemma}

%Let $L$ be a 2-hop path cover on $G$. Then any query $\forall SPQ(u,v)$ on $G$ can be answered using $L$.

%\end{lemma}

\subsection{Path Labelling Methods}

%\medskip
%\noindent\textbf{Naive path labelling.} 
To answer \rv{shortest-path-graph} queries, a naive labelling-based method is, for each vertex $v\in V$, to conduct a breadth-first search (BFS) from $v$ and store the distances between $v$ and all other vertices in the label of $v$, i.e. $L(v)=\{(u,\delta _{vu})|u\in V\}$, which is a 2-hop path labelling. %, and can answer shortest path queries on $G$. %More specifically, given a query $\forall SPQ(u,v)$, we  can answer this query by computing the shortest path graph $G_{uv}$ in two steps: (1) computing $V(G_{uv})=\{v|d_L(s,t)=d_L(s,v)+d_L(v,t)\}$; (2) computing $E(G_{uv})=\{(s,t)|\{s,t\}\subseteq V(G_{uv})\wedge d_L(s,t)=1 \wedge (|d_L(u,s)-d_L(u,t)|=1\vee |d_L(v,s)-d_L(v,t)|=1) \}$. 
Although \rv{shortest-path-graph} queries can be answered using $L$, it is inefficient, particularly when a graph is large. The time and space complexity of constructing such labels are $O(|V||E|)$ and $O(|V^2|)$ respectively. Answering one \rv{shortest-path-graph} query would cost $O(|V^2|)$ in the worst case. A question that naturally arises is: can we follow the idea of Pruned Landmark Labelling (PLL) \cite{akiba2013fast}, which has been shown to be successful for distance queries, to develop a pruning strategy for \rv{shortest-path-graph} queries for 
improving efficiency? We will thus introduce two pruned path labelling methods for \rv{shortest-path-graph} queries in the following.

\medskip
\noindent\textbf{Pruned path labelling.}
Inspired by Pruned Landmark Labelling (PLL) \cite{akiba2013fast}, we conduct pruning during the breadth-first searches, i.e. \emph{pruned} BFSs, for \rv{shortest-path-graph} queries. We abbreviate this pruned path labelling method by PPL.

PPL works as follows. Given a pre-defined landmark order $[v_1,v_2,$ $\dots,v_{|V|}]$ over all vertices in $G$, we conduct a pruned BFS from each vertex one by one as described in Algorithm \ref{alg:pruned-bfs}. 
In each pruned BFS rooted at $v_k$, we use $depth[v]$ to denote the distance between $v_k$ and $v$. \rv{Further, $L_{k-1}$ refers to the labels that have been constructed through the previous pruned BFSs from vertices $[v_1,\dots,v_{k-1}]$, and $d_{L_{k-1}}(v_k, u)$ denotes the distance between $v_k$ and $u$ being queried using labels in $L_{k-1}$. When $d_{L_{k-1}}(v_k, u)<depth[u]$, the label $(v_k, depth[u])$ is pruned (Lines 6-7) because labels in $L_{k-1}$ have already covered the shortest paths between $v_k$ and $u$. In other words, $v_k$ is only added into the labels of vertices $u$ when $d_{L_{k-1}}(v_k, u)\geq depth[u]$ (Line 8). Note that, unlike PLL, in the case of $d_{L_{k-1}}(v_k, u)$ $=depth[u]$, the label $(v_k, depth[u])$ cannot be pruned in PPL; otherwise, 2-hop path cover is not guaranteed, i.e., not all shortest paths are covered by labels. When $d_{L_{k-1}}(v_k, u)\leq depth[u]$, no further edges are traversed from $u$ because paths in this expansion have already been covered by labels in $L_{k}$ (Lines 6-7 and 9-10).}

\begin{algorithm}[ht!] 
\caption{PrunedBFS} \label{alg:pruned-bfs}
\KwIn{$G=(V,E)$; a landmark $v_k$; a labelling $L_{k-1}$}
    $Q\leftarrow \emptyset$;    $Q.push(v_k)$;\\
     $depth[v_k]\leftarrow 0$, $depth[v]\leftarrow \infty$ for all $v\in V\backslash\{v_k\}$;\\
    $L_{k}(v)\leftarrow L_{k-1}(v)$ for all $v\in V$;\\
    \While{$Q$ is not empty}
    {
        dequeue $u$ from $Q$;\\
        \If{$d_{L_{k-1}}(v_k,u)<depth[u]$}
        {
            continue;\\
        }
        $L_{k}(u)\leftarrow L_{k}(u)\cup \{(v_k,depth[u])\}$;\\
        \If{$d_{L_{k-1}}(v_k,u)=depth[u]$}
        {
            continue;
        }
        \For{all $(u,v_i)\in E$ s.t. $depth[v_i]=\infty$}
        {
            $depth[v_i]\leftarrow depth[u]+1$;\\
            enqueue $v_i$ to $Q$;\\
        }
    }
    \textbf{return} $L_k$;
\end{algorithm}

\begin{comment}
\begin{lemma}
For any pair of vertices $s,t\in V$, $d_{L_n}(s,t)=d_G(s,t)$.
\end{lemma}

\begin{lemma}
For any shortest path $p_{st}$ between $s$ and $t$ whose length is greater than 1, there exist one $v$ where $(v,\delta_{sv})\in L_k(s)$ and $(v,\delta_{vt})\in L_k(t)$, $v$ covers $p_{st}$.
\end{lemma}
\end{comment}

To answer a query \rv{$SPG(u,v)$}, we need to compute vertices and edges of $G_{uv}$ from a pruned path labelling $L$ recursively.
Assume that $d_{G}(u,v)\neq 1$; otherwise we finish with $G_{uv}$ containing only one edge $(u,v)$. 
We begin with $E(G_{uv})=\emptyset$. We find the common landmarks in their labels that are on the shortest paths, e.g., computing a set $V_{uv}=\{r|r=min(\delta _{ur}+\delta _{vr}), (r,\delta _{ur})\in L(u),(r,\delta _{vr})\in L(v)\}$. Then we query the shortest paths between u, v and these common landmarks, i.e., $(u,r)$ and $(v,r)$ for each $r\in V_{uv}$. The query $\forall SPQ(u,v)$ is computed by combining the shortest paths between u, v and the landmarks, i.e., $E(G_{uv})=\bigcup_{r\in V_{uv}}(E(G_{ur})\cup E(G_{vr}))$. 
%We begin with $P_1=\{(u,v)\}$, $V(G_{uv})=\{u,v\}$, and $E(G_{uv})=\emptyset$. For any pair of vertices in $P_1$ whose distance is greater than 1, we find the common vertices in their labels that are on their shortest paths, e.g., for $(u,v)\in P_1$, we compute $V_{1}=\{r|r=min(\delta _{ur}+\delta _{vr}), (r,\delta _{ur})\in L(u),(r,\delta _{vr})\in L(v)\}$. Then $P_1$ is replaced by $P_2=\{(t, r)|t\in \{u,v\}, r\in V_1, d_G(t,r)\neq 1\}$-$P_1$, $V_1$ is added into $V(G_{uv})$ and $\{(t, r)|t\in \{u,v\}, r\in V_1, d_G(t,r)= 1\}$ is added into $E(G_{uv})$. This process is repeated for each pair of vertices in $P$ until $P$ is empty, i.e. all vertices and edges in $G_{uv}$ have been found. 

\begin{example}\label{exa:ppl}
When using PPL to answer the query $\forall SPQ(3,7)$ on the graph $G$ in \rv{Figure} \ref{fig:new_ex}(a), we start with $(3,7)$ and obtain $V_{3,7}=\{1,2\}$. This leads to four new queries $(3,1), (7,1), (3,2)$ and $(7,2)$. The distance between $3$ and $1$ is 1. Thus, $E(G_{3,7})=\{(1,3)\}\cup E(G_{7,1}) \cup E(G_{3,2}) \cup E(G_{7,2})$. For the new query $(7,1)$, we obtain $V_{7,1}=\{2\}$, leading to another queries $(7,2)$ and $(1,2)$. Similarly, for $(3,2)$ and $(7,2)$ we obtain queries $(1,2)$, $(2,3)$, $(2,5)$ and $(2,7)$. Note that the labels of vertex $3$ are visited more than once, i.e. when querying $(3,7)$ and $(3,2)$. Further, because $3$ and $7$ have multiple shortest paths between them, more than one common vertex on their shortest paths \rv{are} found from their labels, i.e. $\{1,2\}$. As a result, edges $(2,5)$ and $(5,7)$ are handled multiple times, i.e., when querying $(2,7)$ and $(1,7)$.  
%When using PPL to answer the query $\forall SPQ(3,7)$ on the graph $G$ in Figure \ref{fig:new_ex}(a), we start with $P_1=\{(3,7)\}$ and obtain $V_1=\{1,2\}$. This leads to $P_2=\{(3,1),$\textbf{(3,2),(1,7),(2,7)}$\}$ in which all pairs have a distance grater than $1$, except for $(3,1)$. Then, we conducted the 2nd iteration to get $V_2=\{1,2,4,5\}$, i.e., %$\emptyset$ for $(3,1)$,
%$\{1,4\}$ for $(3,2)$, $\{2\}$ for $(1,7)$ and $\{5\}$ for $(2,7)$. This results in $P_3=\{(1,2),(4,3), (4,2),(5,2),$ $(5,7)\}$ and $V_3=\emptyset$. Thus, the query answer $E(G_{3,7})=P_3$. Note that the labels of vertex $3$ are visited more than once, i.e. when $(3,7)\in P_1$ and $(3,2)\in P_2$. Further, because $3$ and $7$ have multiple shortest paths between them, more than one common vertex on their shortest paths is found from their labels, i.e. $\{1,2\}$. As a result, edges $(2,5)$ and $(5,7)$ are handled multiple times, i.e., when $(2,7)\in P_2$ and $(1,7)\in P_2$.  

%when handling $(2,7)$ we find edges $(2,5)$ and $(5,7)$ and when handling $(1,7)$ we query $(2,7)$ again thus find edges $(2,5)$ and $(5,7)$.
\end{example}

PPL has the same time and space complexity for constructing labels as the naive labelling-based method. However, due to pruning in BFSs, PPL can construct labels more efficiently with a significantly reduced labelling size. Nonetheless, the query time of PPL is still slow because all shortest paths between two vertices can only be found through searching vertices and edges using labels in a recursive manner. When more than one shortest path exists between query vertices, labels of some vertices are searched repeatedly and edges are found repeatedly, leading to unnecessary computational cost, e.g., vertex $3$ and edges $\{(2,5)(5,7)\}$ as in Example \ref{exa:ppl}.
%and some vertices and edges may also appear on more than one shortest path, which makes it challenging to find their relevant topological structures.  %\q{The time complexity for finding middle points between two vertices is $O(\omega log|V|)$ where $\omega$ is the tree width of $G$.} \p{I am not sure if this always holds, please refer to Theorem 4.4 in PLL paper}

%We now introduce how to find shortest path graph between $s$ and $t$. If $d_{L_n}=1$, we know $s$ and $t$ share an edge. Otherwise, we find edges on the shortest paths iteratively. We find $R_{st}=argmin$$\{\delta(s,r)+\delta(r,t)~|$$(r,\delta(s,r))\in L(s),$ $(r,\delta(t,r))\in L(t)\}$ where $R_{st}$ covers $P_{st}$. Then we find edges on the shortest paths between $s$ or $t$ and any vertex in $R_{st}$.
\begin{comment}
\begin{algorithm}
\caption{FindEdges$(s,t)$ \q{is this really needed?}} 
$E_{st}\leftarrow \phi$;\\
\If{$d_{L_n}(s,t)=1$}{ $E\leftarrow E\cup\{(s,t)\}$;\\ \textbf{return} $E_{st}$;\\}
\For{each $r$ s.t. $(r,\delta(s,r))\in L(s),(r,\delta(t,r))\in L(t)$}
{
    \If{$\delta(s,r)+\delta(r,t)=d_{L_n}$}
    {
        $E_{st}\leftarrow E_{st}\cup $FindEdges$(s,r)\cup $FindEdges$(r,t)$;\\
    } 
}
\textbf{return} $E_{st}$;
\end{algorithm}
\end{comment}

\medskip
\noindent\textbf{Path labelling with parents.}  One common technique to accelerate query time for shortest-path-graph queries is to keep additional parent information in labels \rv{so as to provide a clearer direction towards shortest paths.} For example, 
Akiba \emph{et al}. \cite{akiba2013fast} extended the label of each vertex $v\in V$ to a set of triples $(r,\delta _{vr}, w_{vr})$ where $w_{vr}$ is the ``parent" vertex of $r$ on a shortest path from $v$ to $r$.
To find all shortest paths, this requires us to store all parent vertices of a vertex, rather than just one parent vertex as in the previous work for finding one shortest path. To be precise, we store a set of triples $\{(r_i,\delta _{vr_i}, W_{vr_i})\}_{1\leq i\leq |V|}$ where $W_{vr_i}$ is a set of ``parent" vertices of $v$ on a shortest path from $v$ to a landmark $r_i$. To reduce space overhead, for each of such shortest paths, we store the ``parent" vertices of $v$, rather than the ``child" vertices of $r_i$, because landmarks often have a high degree \cite{akiba2013fast}. %This is because a landmark $r_i$ often has more neighbours than a vertex $v$. 
To distinguish from PPL, we abbreviate this method with additional parent information by ParentPPL.

The time complexity of ParentPPL for constructing labels remains to be $O(|V||E|)$ but the space complexity becomes $O(|V||E|)$. In practice, additional parent information only helps speed up query time on small graphs. Even for a graph with millions of vertices and edges, \rv{ParentPPL would run out of time (same as PPL) or space,} failing to construct labels. We will discuss this further in Section \ref{sec:experiments}.

\subsection{Discussion}

For 2-hop labelling-based methods such as PPL and ParentPPL, the structure (i.e. shortest paths) of a graph is encoded into distance information of labels under the guarantee of 2-hop path cover. Although shortest paths can be recovered through computing distances between pairs of vertices, these methods are inefficient. \rv{This is because they recursively split each path into two sub-paths and compute vertices on sub-paths via distance information in labels, which leads to redundant or unnecessary searches.} 
\rv{Although storing parent information can often accelerate query time, it makes labelling size larger and does not scale over large networks.} Therefore, we need to find a method for which (1) the labelling size is small, (2) the structure of shortest paths can be recovered in an efficient way, i.e., reducing redundant and unnecessary computation, and (3) it can scale over large networks.

\section{Query-by-Sketch}\label{sec:qbs}

In this section, we present an efficient and scalable method for solving the \rv{shortest-path-graph} problem, called \emph{Query-by-Sketch} (QbS). Conceptually, this method consists of three key components: \emph{labelling}, \emph{sketching} and \emph{searching}, which will be discussed in Sections \ref{subsec:labelling}, \ref{subsec:sketch} and \ref{subsec:searching}, respectively. The main idea behind this method is to construct a labelling scheme through precomputation, and then answer \rv{shortest-path-graph} queries by performing online computation that involves two steps: fast sketching and guided searching.
\subsection{Labelling Scheme}
\label{subsec:labelling}

Let $G=(V,E)$ be a graph, $R\subseteq V$ be a set of landmarks, and $|R|<\!\!<|V|$ (i.e., $|R|$ is sufficiently smaller than $|V|$). 
We first preprocess the graph $G$ to obtain a compact representation of the shortest paths among landmarks, called a \emph{meta-graph} of $G$. 
Then, based on such a meta-graph, we define a labelling scheme to assign a label to each vertex in $G$ such that, given any pair of vertices $u,v\in V$, we can efficiently compute a sketch for answering $SPG(u,v)$.

\begin{definition}{\emph{[\textsc{Meta-graph}]}}\label{def:metaG}
A \emph{meta-graph} is $M=(R, E_R, \sigma)$ where $R$ is a set of landmarks, $E_R\subseteq R \times R$ is a set of edges s.t. $(r,r')\in E_R$ iff at least one shortest path between $r$ and $r'$ does not go through any other \rv{landmarks,} and $\sigma: E_R\mapsto \mathbb{N}$ assigns each edge in $E_R$ a weight, i.e. $\sigma(r,r')=d_G(r, r')$.
\end{definition}

Conceptually, a meta-graph represents how landmarks are connected through their shortest paths in a graph $G$. %Each edge $(r,r')$ in a meta-graph is associated with its shortest path graph $G_{rr'}$, represented as $\Delta(r, r') = E(G_{rr'})$.

\begin{figure}
    \centering
    \includegraphics[width=7.5cm]{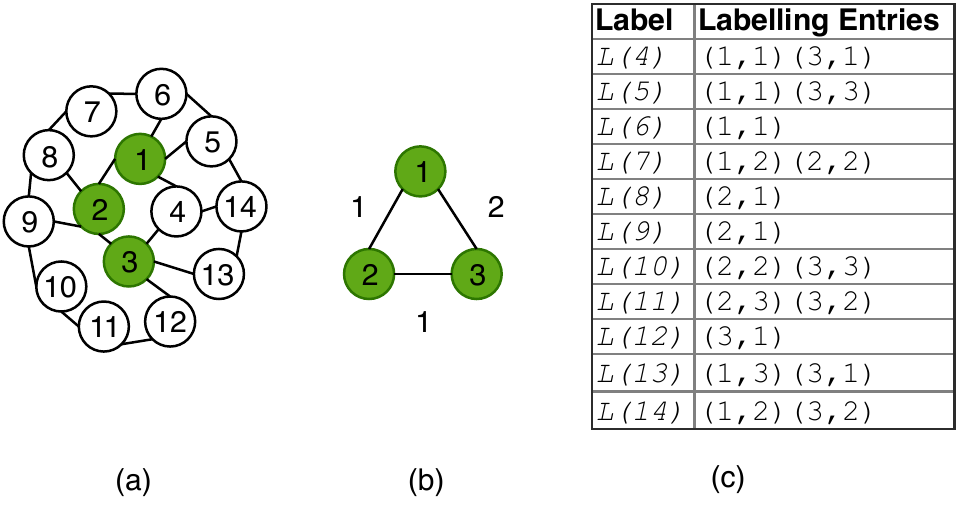}\vspace{-0.3cm}
    \caption{(a) A graph with three landmarks $\{1,2,3\}$ (highlighted in green), (b) a meta-graph, and  (c) a path labelling. }
    \label{fig:full-scheme}\vspace{-0.3cm}
\end{figure}
%We use $\wp(E)$ to denote the powerset of $E$. \item  $\Delta: R \times R \rightarrow \wp(E)$ is a \emph{path decoding function} s.t. for any $r, r' \in R$, $\Delta(r, r') = E(G_{rr'})$ holds;

\begin{definition}{\emph{[\textsc{Labelling scheme}]}}\label{def:scheme}
 A \emph{labelling scheme} $\mathcal{L}=(M, L)$ consists of a meta-graph $M$ and a path labelling $L$ that assigns to each vertex $u\in V\backslash R$ a label $L(u)$ s.t. 
\begin{align}\label{equ:path-labelling}
\begin{split}
L(u)=\{(r, \delta_{ur})| r\in R,  \delta_{ur}=d_G(u,r), \\\exists p\in P_{ur} (V(p)\cap R=\{r\})
\}. 
\end{split}
\end{align}
\end{definition}
Note that, to accurately present how vertices are linked to landmarks, we only allow that $(r, \delta_{ur})$ is in the label $L(u)$ iff there exists at least one shortest path between $u$ and $r$ that does not contain other landmarks.
\ye{}
\begin{example}
Figure \ref{fig:full-scheme} depicts a graph (a) and the meta-graph (b) and the path labelling (c) of this graph. The edge $(1,3)$ in the meta-graph is assigned with a weight $2$, i.e. $\sigma(1,3)=2$, since there is one shortest path between $1$ and $3$ which goes through $4$. The label of $4$ in the path labelling contains $(1,1)$ and $(3,1)$. The labelling entry $(2,2)$ is not included in the label of $4$ because every shortest path between $4$ and $2$ goes through another landmark, i.e. $1$ or $3$.
\end{example}

Algorithm \ref{alg:labels} describes the pseudo-code of our algorithm for constructing a labelling scheme. Given a graph $G$ and a set of landmarks $R$, we conduct a BFS from each landmark $r_i\in R$. We use two queues $Q_L$ and $Q_N$ to keep track of visited vertices, which respectively need to be labeled and not to be labeled. All vertices, except for $r_i$, are initialized as being unvisited (Line 5). For each vertex $u\in Q_L$ at the $n$-th level of the BFS, we set its unvisited neighbors $v$ being visited (Line 10). If $v$ is a landmark, we push $v$ into $Q_N$ and add an edge into $E_R$ and store the distance between $r_i$ and $v$ to the edge in $\sigma$. Otherwise, we push $v$ into $Q_L$ and add a label in $L$ for $v$ (Lines 11-17). Then, We check unvisited neighbors of each vertex $u\in Q_N$ at the $n$-th level, and push $v$ into $Q_N$ without adding a label in $L$ or an edge in $M$ (Lines 18-21). \rv{This process is conducted level-by-level on the BFS (Line 22).}%The shortest path graph between landmarks $r_i$ and $r_j$ is stored in $\Delta$ (Line 23). 

\begin{algorithm} [t] 
\caption{Constructing a labelling scheme $\mathcal{L}$}  
\label{alg:labels}
\KwIn{$G=(V,E)$; a set of landmarks $R\subseteq V$}
\KwOut{A labelling scheme $\mathcal{L}=(M,L)$ with $M=(R,E_R,\sigma)$.} 
$E_R\leftarrow \emptyset$; $L(v)\leftarrow \emptyset$ for all $v\in V$\\
%\For{$v\in V$}
%{
%    $L_p(v)\leftarrow \emptyset$; \\
%}
\For{all $r_i\in R$}
{
    $Q_{L}\leftarrow \emptyset$;  $Q_{N}\leftarrow \emptyset$; \\
    $Q_{L}$.push($r_i$);\\
    $depth[r_i]\leftarrow 0$; $depth[v]\leftarrow \infty $ for all $v\in V\backslash\{r_i\}$;\\
    n = 0;\\
    \While{$Q_{L}$~and~$Q_{N}$~are~not~empty}
    {
        \For{all \textbf{}$u\in Q_{L}$~at~depth~n}
        {
            
            \For{all unvisited neighbors $v$ of $u$}
            {   
                $depth[v]\leftarrow n+1$;\\
                \If{$v$ is a landmark}
                {
                    $Q_{N}$.push($v$);\\
                    $E_R\leftarrow E_R\cup \{(r_i,v)\}$;\\
                    $\sigma(r_i,v)\leftarrow depth[v]$; \\
                    
                }
                \Else{
                    $Q_{L}$.push($v$);\\
                    $L(v)\leftarrow L(v) \cup \{(r_i, depth[v])\}$;\\
                    %Add labels $(r_i,n+1)$ to $L_p(v)$  ;\\
                }
            }       
        }
        \For{all $u\in Q_{N}$~at~depth~n}
        {
            %\If{u is a landmark}
            %{
            %    Guided search from $u$ to $r_i$ ;\\
            %    $\delta _C[r_i][u]\leftarrow n$;\\
            %    %Add the distance and paths between $u$ and $r_i$ to $H$ ;\\
            %}
            \For{all unvisited neighbors $v$ of $u$}
            {   
                $depth[v]\leftarrow n+1$;\\
                $Q_{N}$.push($v$);\\
            }
        }
        $n\leftarrow n+1$;\\
    }
%    $\Delta(r_i,r_j) \leftarrow$ $E(G_{r_i,r_j})$ for all $r_j\in R$; %shortest-paths($r_i,r_j$) for all $r_j\in R$;\\
}

\end{algorithm}  

\begin{example}

\begin{figure}[t!]
    \centering
    \includegraphics[width=7.5cm]{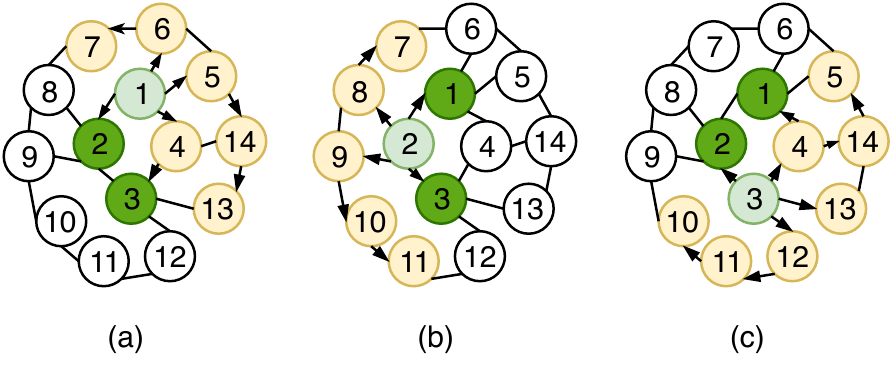}\vspace{-0.3cm}
    \caption{An illustration of labelling: (a), (b) and (c) describe the BFSs rooted at the landmarks 1, 2 and 3, respectively, where light and dark green vertices denote the landmarks, and yellow vertices denote those being labelled.}%, light green vertices denote the root of each BFS, white vertices denote those not being labelled and red vertices are other landmarks.}
    \label{fig:bfs}\vspace{-0.3cm}
\end{figure}
Figure \ref{fig:bfs} shows how our algorithm conducts BFSs to construct labels. The BFS from landmark $1$ is depicted in Figure \ref{fig:bfs}(a), in which vertices $\{4, 5, 6, 7, 13, 14\}$ are labelled because the other vertices are either landmarks or have landmarks in all their shortest paths to landmark $1$. We add edges $(1,2)$ and $(1,3)$ into the meta-graph. In the BFS from landmark $2$ in Figure \ref{fig:bfs}(b), vertices $\{7,8,9,10,11\}$ are labelled because the shortest paths between $2$ and vertices in $\{4,5,6,12,13,14\}$ all go through landmark $1$ or $3$. 
The BFS from landmark $3$ is depicted in Figure \ref{fig:bfs}(c), which works in a similar manner.

\end{example}

\begin{comment}
we maintain two queues $Q_N$ and $Q_C$: \q{(1)$Q_N$ contains vertices $v$ that at least one shortest path between $r_i$ and $v$ does not go through $R\setminus \{r_i\}$ and (2)$Q_C$ contains vertices $v$ that all shortest paths between $r_i$ and $v$ go through $R\setminus \{r_i\}$.} 
\end{comment}
\begin{figure*}
    \centering
    \includegraphics[width=17cm]{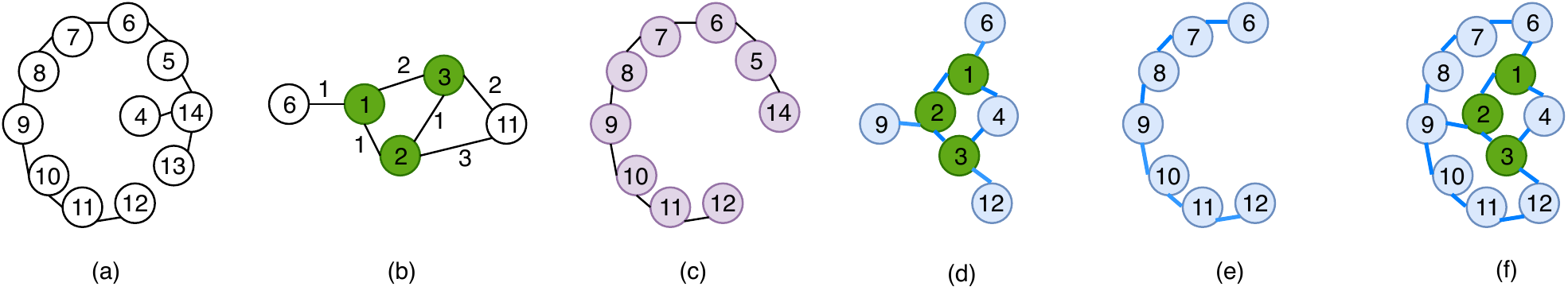}\vspace{-0.2cm}
    \caption{An illustration of sketching and searching: (a) the sparsified graph $G^-$ of the graph $G$ shown in Figure \ref{fig:full-scheme}(a); (b) the sketch for \rv{SPG(6,11)} on the graph $G$; (c) the bi-directional BFS on $G^-$, (d) the recover search based on $\mathcal{L}$, (e) the reverse search based on $G^-$, and (f) shows the query answer of \rv{SPG(6,11)}.} %where green vertices denote landmarks.} %purple vertices denote vertices we search, blue vertices and blue edges denote the vertices and edges we find which are on the shortest path graph.}
    \label{fig:example-search}\vspace{-0.1cm}
\end{figure*}

\subsection{Fast Sketching}
\label{subsec:sketch}
Let $\mathcal{L}=(M, L)$ be a labelling scheme on a graph $G$. For a given query $SPG(u,v)$, we proceed to answer $SPG(u,v)$ in two steps; (1) computing a sketch for two vertices $u$ and $v$ from the labelling scheme $\mathcal{L}$ efficiently; (2) computing the exact answer by conducting a guided search based on the sketch for two vertices $u$ and $v$. Hence, the purpose of such a sketch is to provide an efficient and principled way of searching the answer of $SPG(u,v)$, which is particularly important on very large networks. 

\begin{definition}{\emph{[\textsc{Sketch}]}}\label{def:sketch}
%Let $G=(V,E)$ be a graph and $\mathcal{L}=(M, L)$ be a labelling scheme on $G$, where $M=(R, E_R, \sigma)$.
A \emph{sketch} for \rv{$SPG(u,v)$} on $\mathcal{L}$ is $S_{uv}=(V_S,E_S,\sigma_S)$ where $V_S= \{u,v\}\cup R$ is a set of vertices, $E_S$ is a set of edges, and $\sigma_S : E_S\mapsto \mathbb{N}$ with $\sigma_S (u',v')=d_G(u',v')$, satisfying the condition that $E_S$ contains only edges lying on the paths between $u$ and $v$ with the minimal length as defined below:
\begin{align}\label{equ:sketch}
\begin{split}
\rv{d_{uv}^{\top}}=\min_{(r,r')}\{\delta_{ru}+d_M(r,r')+\delta_{r'v}|(r,\delta_{ru})\in L(u),\\[-8pt] (r',\delta_{r'v})\in L(v)\}; 
\end{split}
\end{align}
%$E_S$ contains the shortest path graph of $(r,r')$ in $M$, $(u,r)$ and $(r',v)$, and $\sigma_S : E_S\mapsto \mathbb{N}$ s.t. $\sigma_S (u',v')=d_G(u',v')$.
\end{definition}

%We use \rv{$d_{uv}^{\top}$} to refer to the distance between two vertices $u$ and $v$ in a sketch $S_{uv}$. 

\rv{Accordingly, we have the following corollary.}

\begin{corollary}
$\rv{d_{uv}^{\top}}\geq d_G(u,v)$ holds.

\end{corollary}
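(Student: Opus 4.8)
The plan is to reduce the claim to two ingredients: first, that distances in the meta-graph $M$ faithfully record the true shortest-path distances between landmarks in $G$, that is $d_M(r,r') = d_G(r,r')$ for all $r,r' \in R$; and second, the ordinary triangle inequality in $G$. Once the first identity is in hand, every term appearing in the minimization defining $d_{uv}^{\top}$ is literally a sum of genuine $G$-distances, and bounding such a sum below by $d_G(u,v)$ becomes immediate.

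First I would establish the key lemma $d_M(r,r') = d_G(r,r')$. For the inequality $d_M(r,r') \geq d_G(r,r')$, observe that any path in $M$ from $r$ to $r'$ is a sequence of meta-edges, each of which, by Definition~\ref{def:metaG}, carries weight equal to the $G$-distance between its endpoints and is realized by an actual shortest path in $G$. Concatenating these realizing paths yields a walk in $G$ from $r$ to $r'$ whose length equals the weighted length of the $M$-path, so no $M$-path can be shorter than $d_G(r,r')$. For the reverse inequality, I would take a shortest path $P$ in $G$ from $r$ to $r'$ and list the landmarks it visits in order, say $r = r_0, r_1, \dots, r_k = r'$. Each subpath of $P$ between consecutive landmarks $r_i$ and $r_{i+1}$ is itself a shortest path (a subpath of a shortest path is shortest) containing no landmark strictly between its endpoints, so $(r_i, r_{i+1}) \in E_R$ with $\sigma(r_i, r_{i+1}) = d_G(r_i, r_{i+1})$. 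Summing over $i$ gives an $M$-path of total weight $\sum_i d_G(r_i, r_{i+1}) = d_G(r,r')$, the last equality holding because the $r_i$ lie in order on a single shortest path. Hence $d_M(r,r') \leq d_G(r,r')$, and combining the two bounds yields equality.

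With the lemma available, the corollary follows directly. For any pair $(r, \delta_{ru}) \in L(u)$ and $(r', \delta_{r'v}) \in L(v)$, Definition~\ref{def:scheme} guarantees $\delta_{ru} = d_G(u,r)$ and $\delta_{r'v} = d_G(v,r')$, so the corresponding term becomes
\begin{displaymath}
\delta_{ru} + d_M(r,r') + \delta_{r'v} = d_G(u,r) + d_G(r,r') + d_G(r',v) \geq d_G(u,v),
\end{displaymath}
where the inequality is the triangle inequality applied along the walk $u \rightsquigarrow r \rightsquigarrow r' \rightsquigarrow v$ (this also covers the case $r = r'$, for which $d_M(r,r) = 0$). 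Since every term in the minimization defining $d_{uv}^{\top}$ is at least $d_G(u,v)$, so is the minimum, which proves $d_{uv}^{\top} \geq d_G(u,v)$.

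The main obstacle is the lemma, and within it the direction $d_M(r,r') \leq d_G(r,r')$: it hinges on the observation that decomposing a shortest path at its landmark vertices produces segments that are simultaneously shortest and free of internal landmarks, which is exactly the membership condition for $E_R$ in Definition~\ref{def:metaG}. A minor point to confirm is that the minimization set is nonempty, so that $d_{uv}^{\top}$ is finite; in a connected graph each vertex has at least its nearest landmark in its label, although even an empty minimum, read as $+\infty$, would still satisfy the stated inequality.
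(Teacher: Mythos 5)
Your proof is correct. The paper in fact states this corollary without any proof, presenting it as an immediate consequence of Definition \ref{def:sketch}, so your argument supplies reasoning the paper leaves entirely implicit; what you prove matches the intended justification, namely that every candidate value $\delta_{ru}+d_M(r,r')+\delta_{r'v}$ in Eq.~(\ref{equ:sketch}) is the length of a genuine walk in $G$ from $u$ through $r$ and $r'$ to $v$, hence at least $d_G(u,v)$. One observation: for the corollary itself you only need the direction $d_M(r,r')\geq d_G(r,r')$ of your key lemma (the concatenation argument); the reverse inequality $d_M(r,r')\leq d_G(r,r')$, obtained by cutting a shortest path at its landmark vertices, is not required here. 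It is, however, exactly what the paper tacitly relies on elsewhere --- e.g.\ for the claim $d_G(u,v)=\min(d_{G[V\backslash R]}(u,v),\, d_{uv}^{\top})$ in Section \ref{subsec:searching} and for the correctness of the sketch --- so establishing the full equality is a sensible strengthening rather than wasted effort. Your closing checks are also sound: the minimization set is nonempty because each non-landmark vertex's nearest landmark $r$ always satisfies the label condition of Definition \ref{def:scheme} (any other landmark lying internally on a shortest $u$--$r$ path would be strictly closer, a contradiction), and even under the convention $\min\emptyset=+\infty$ the stated inequality holds.
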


%We can also have the following corollary which describes how a sketch $S_{uv}$ relates to the query answer $G_{uv}$.

%\begin{corollary}
%Whenever $d_S(u,v)= d_G(u,v)$, there exists a partial homomorphism $h: G_{uv} \mapsto S_{uv}$ s.t. for any pair of two vertices $u',v'\in V(G_{uv})$, $(h(u'),h(v'))\in E_S$ iff \y{$\exists p\in P^{G_{uv}}_{u'v'} (V(p)\cap (R\cup\{u,v\}) =\{u', v'\})$ and $\{u',v'\} \not\subseteq \{u,v\}$} hold.
%\end{corollary}

 Algorithm \ref{alg:sketch} describes how to construct a sketch. Let $u$ and $v$ be a pair of vertices. We start with $V_S=\emptyset$ and $E_S=\emptyset$. Then, for each pair of landmarks $\{r, r'\}$, we compute the minimum length $\pi_{rr'}$ of paths between $u$ and $v$ that go through $r$ and $r'$ using the labels in $L$ and the meta graph $M$ (Lines 2-5). After that, we obtain the minimum length of paths between $u$ and $v$ that go through at least one landmark, \rv{i.e., $d_{uv}^{\top}$} (Line 6), and add the edges in these paths into $E_S$, the vertices in these paths into $V_S$, and the corresponding distances are associated with the edges \rv{(Lines 7-13)}.

%\noindent \textbf{Which edge is in $E'$?} We say $d_k=min_{r\in R}(d_G(u,r)+d_G(r,v))$. For any landmark $r\in V'$: (1) If $r\in R_u$, edge $(u,r)\in E'$ and $\omega((u,r))=\delta_L (r,u))$; (2) If $r\ R_v$, edge $(v,r)\in E'$ and $\omega((r,v))=\delta_L (r,v))$. For any pair $(r,r')$ of landmarks where $r\in R_u$ and $r'\in R_v$, if $\delta_L(r,u)+\delta_C(r,r')+\delta_L(r',v)=d_k$, then for each edge $(r_i,r_j)$ in $M_{r_i r_j}$, edge $(r_i,r_j)\in E'$ and $\omega((r_i,r_j)) = \delta _C(r_i,r_j))$ into $E'$.

\begin{algorithm}  
\caption{Computing a sketch $S_{uv}$}  
\label{alg:sketch}
\KwIn{$\mathcal{L}=(M,L)$, two vertices $u$ and $v$.}
\KwOut{A sketch  $S_{uv}=(V_S,E_S,\sigma_S)$} 
$V_S\leftarrow \emptyset$, $E_S\leftarrow \emptyset$;\\
\For{all $\{r, r'\}\subseteq R$}
{
    $\pi_{rr'}\leftarrow +\infty$; \\
    \If{$(r,\delta_{ur})\in L(u)$ and $(r',\delta_{vr'})\in L(v)$}
    {
        $\pi_{rr'}\leftarrow \delta_{ur} + d_M(r,r') + \delta_{vr'}$;\\
    }
}
$d_{uv}^{\top} \leftarrow$ min\{$\pi_{rr'} | \{r,r'\}\subseteq R$\};\\
\For{all $\{r,r'\}\subseteq R$ and $\pi_{rr'}=d_{uv}^{\top}$}
{
 $E_S\leftarrow E_S \cup \{(u,r),(v,r')\}$;\\
        $\sigma_S(u,r)\leftarrow \delta_{ur}$, $\sigma_S(v,r')\leftarrow \delta_{vr'}$;\\
        \For{all $(r_i,r_j)$ in the shortest path graph of $(r,r')$ in $M$}
        {   
            $E_S\leftarrow E_S \cup \{(r_i,r_j)\}$;\\ 
            $\sigma_S(r_i,r_j)\leftarrow \sigma(r_i,r_j);$\\
        }
                    $V_S\leftarrow V(E_S)$;\\
}
\end{algorithm}  

\begin{example}
Figure \ref{fig:example-search}(b) shows the sketch between two vertices $6$ and $11$. The sketch has the edges $(1,6)$, $(1,3)$, $(3,11)$, $(2,3)$ $(1,2)$ and $(2,11)$ because we have the following shortest paths between $6$ and $11$ with $\delta_{6,1}+d_M(1,3)+\delta_{11,3}=5$ % , $\delta_{6,1}+d_M(1,2)+d_M(2,3)+\delta_{11,3}=5$,
and $\delta_{6,1}+d_M(1,2)+\delta_{11,2}=5$. We thus have $\rv{d^{\top}_{6,11}}=5$, and $\rv{d^{\top}_{6,11}}=d_G(6,11)$. 
\end{example}

\subsection{Guided Searching}
\label{subsec:searching}
Guided by $S_{uv}$, we conduct a search to compute the exact answer of \rv{$SPG(u,v)$}, based on the following observations: 

\begin{itemize}
	\item Such a search can be conducted on a sparsified graph $G[V\backslash R]$ by removing all landmarks in $R$ and all edges incident to these landmarks from $G$. $d_{G[V\backslash R]}(u,v)$ may potentially be greater than $d_G(u,v)$; however, the number of search steps in this sparsified graph can be upper bounded by \rv{$d_{uv}^{\top}$ due to the fact that $d_G(u,v)=min(d_{G[V\backslash R]}(u,v), d_{uv}^{\top})$.} %This helps improve the efficiency of queries.
	
	\item $S_{uv}$ can guide how to conduct a bi-directional search on the sparsified graph $G[V\backslash R]$. Specifically, for $t\in\{u,v\}$, we have 
	\begin{align}\label{equ:sidechoice}
		d_t^{*}=\max_{(r,t)\in E_S}\sigma_S(r,t)-1,
	\end{align}
	\rv{which suggests the number of search steps from the $u$ and $v$ sides, respectively.} %the number of steps needed for computing $G_{uv}^\mathcall{L}$ from the $u$ and $v$ sides, repsectively. While this computation may prove unnecessary if $d_{G^-}(u,v)$ turns out to be strictly smaller, using it to guide the search will save time in cases where $G_{u,v}^L$ is needed.
	Here, we subtract 1 because $r$ can be found via labels of vertices in at most $\sigma_S(r,t)-1$ steps.
\end{itemize}
%Nonetheless, restricted by the upper bound $d_S(u,v)$, this ``best" search steps might not always match with the actual search steps $d_u$ and $d_v$ from the $u$ and $v$ sides. 

Given a query $SPG(u,v)$ on a graph $G$, the answer $G_{uv}$ can thus be computed by searching over the sparsified graph $G^-=G[V \backslash R]$ and the labelling scheme $\mathcal{L}$, guided by the sketch $S_{uv}$, as follows:   %through the upper bound $d^{\top}_{uv}$ and the objective function as such that:

\begin{equation}\label{equ:search}
	G_{uv}= 
	\begin{cases}
		G^{\mathcal{L}}_{uv}& \text{if } d_{G^-}(u,v)> d^{\top}_{uv};\\
		G^-_{uv}\cup G^{\mathcal{L}}_{uv}& \text{if } d_{G^-}(u,v)=d^{\top}_{uv};\\
		G^-_{uv}           & \text{otherwise}.
	\end{cases}
\end{equation}
We use $G^{\mathcal{L}}_{uv}$ to refer to shortest paths between $u$ and $v$ that go through at least one landmark in $R$. 

\rv{Generally, a guided search has three stages: (1) \emph{Bi-directional search}, which has a \emph{forward search} from the $u$ side and a \emph{backward search} from the $v$ side \cite{goldberg2005computing}, under the guide of $S_{uv}$ w.r.t. Eq. \ref{equ:sidechoice}. This search terminates when common vertices are found or the upper bound $d^{\top}_{uv}$ is reached. 
(2) \emph{Reverse search}, which reverses the previous bi-directional search back to $u$ and $v$ in order to compute shortest paths in $G^-_{uv}$. 
(3) \emph{Recover search}, which recovers the relevant labelling information under the guide of $S_{uv}$ in order to compute shortest paths in $G^{\mathcal{L}}_{uv}$.}
\rv{As we do not know initially which of the three cases of Eq. \ref{equ:search} holds, a bi-directional search is always performed. This search provides us with $d_{G^-}(u,v)$, though we abort once $d_{G^-}(u,v)>d_{uv}^{\top}$ can be guaranteed.}
Then depending on the values of $d_{G^-}(u,v)$ and $d^{\top}_{uv}$, a reverse search, a recover search, or both of them are performed to compute $G^-_{uv}$ and $G^{\mathcal{L}}_{uv}$ as in Eq. \ref{equ:search}.

\begin{algorithm}[t]
	\caption{Searching on $G[V\backslash R]$}
	\label{alg:bounded-search}
	\KwIn{$G^-=G[V\backslash R]$, $S_{uv}$, $\mathcal{L}=(M,L)$}
	\KwOut{A shortest path graph $G_{uv}$}
	$d_{uv}^{\top}, d_u^{*}, d_v^{*} \leftarrow get\_bound(S_{uv})$;\\
	$P_u \leftarrow \emptyset$, $P_v \leftarrow \emptyset$, $d_u\leftarrow 0$, $d_v \leftarrow 0$;\\
	Enqueue $u$ to $Q_u$ and $v$ to $Q_v$;\\
	$depth_u[w]\leftarrow \infty$, $depth_v[w]\leftarrow \infty$ for all $w\in V\backslash R$;\\
	$depth_u[u]\leftarrow 0$, $depth_v[v]\leftarrow 0$;\\
	\While{$d_u+d_v<d_{uv}^{\top}$}
	{
		$t\leftarrow pick\_search(P_u,P_v,d_u^{*},d_v^{*},d_u,d_v)$;\\
		
		\If{$t=u$} 
		{
			$Q_u\leftarrow forward\_search(Q_u$);\\ 
		}
		\If{$t=v$}
		{
			$Q_v\leftarrow backward\_search(Q_v$);\\
		}
		\rv{$P_t\leftarrow P_t\cup Q_t$}; $d_t\leftarrow d_t+1$;\\
		$depth_t[w]\leftarrow d_t$ \textbf{for} $w\in Q_t$;\\    
		\If{$P_u\cap P_v$ is not empty}
		{
			\textbf{break};\\
		}
	}
	\If{\rv{$P_u \cap P_v\neq \emptyset$}}
	{
		\rv{$G^-_{uv}\leftarrow reverse\_search(P_u\cap P_v,G^-,depth_u,depth_v)$;}\\
	}
	
	\If{$d_u+d_v=d_{uv}^{\top}$}
	{
		$Z \leftarrow \emptyset$;\\
		\For{all $\rv{(r,t)}\in E_S$ with $t\in \{u,v\}$}
		{
		    $d_m \leftarrow \min\{\sigma_S(r,t)-1,d_t\}$;\\
		    \For{all $w$ with $depth_t[w]=d_m$, $(r,\delta_{wr})\in L(w)$, $\delta_{wr}+d_m=\sigma_S(r,t)$} 
		    {
		    $Z\leftarrow Z \cup \{(w,r)\}$;\\
		    }
		}
		\rv{$G^{\mathcal{L}}_{uv}\leftarrow recover\_search(S_{uv}, \mathcal{L}, Z, G^-, depth_u, depth_v$});\\
	}
	\rv{$G_{uv}\leftarrow G^-_{uv} \cup G^{\mathcal{L}}_{uv}$;}\\
	
\end{algorithm}

Algorithm \ref{alg:bounded-search} presents our guided search algorithm. We maintain two queues $P_u$ and $P_v$ which contain the set of all vertices traversed from $u$ and $v$, respectively. $d_u$ and $d_v$ indicate the levels of traversal being conducted in the BFSs rooted at $u$ and $v$, respectively. \rv{Two queues $Q_u$ and $Q_v$ keep vertices being searched} from $u$ and $v$ at the $d_u$ and $d_v$ level, respectively.
Initially $P_u$ and $P_v$ are empty, and $u$ and $v$ are enqueued into $Q_u$ and $Q_v$ respectively.
$depth_u$ and $depth_v$ denote the depths of all vertices in the BFSs rooted at $u$ and $v$. %\ye{We do not store ``parent" information because computing shortest paths from ``parent" information is inefficient when most of vertices are not on the shortest paths being reversed or recovered.

\rv{A bi-directional search is first conducted (Lines 6-15).} In each iteration, the bi-directional search is guided by $d_u^*$ and $d_v^*$ as well as the relative sizes of $P_u$ and $P_v$ to decide the next step \rv{(Line 7).} %\ye{No predecessor information is computed or stored at this stage to help us reverse the shortest paths.} %Specifically, if ($d_u^*> d_u$ and $d_v^*> d_v$) or ($d_u^*\leq d_u$ and $d_v^*\leq d_v$), then the choice of a forward search ($t=u$) and a backward search ($t=v$) is determined by the size. Otherwise we choose $t$ where $d_t^*>d_t$.
\rv{We choose $t$ where $d_t^*>d_t$ and $t\in\{u,v\}$. If both $u$ and $v$ satisfy this condition, or none of them satisfy this condition, then the choice of a forward search ($t=u$) and a backward search ($t=v$) is determined by the sizes of $P_u$ and $P_v$.} Accordingly, $P_u$ or $P_v$ are expanded \rv{(Line 12)}. The bi-directional search terminates either when $d_u+d_v$ reaches the upper bound $d_{uv}^{\top}$ or $P_u\cap P_v$ is not empty.
\rv{This approach extends the \emph{Optimized Bidirectional BFS} algorithm of \cite{hayashi2016fully} by incorporating bounds obtained from our sketch.}

\rv{If $P_u\cap P_v$ is not empty, we have $d_{G^-}(u,v)\leq d_{uv}^{\top}$ and thus start a reverse search (Lines 16-17). For each vertex $x\in P_u\cap P_v$, we compute the shortest paths between $u$ and $x$ and between $v$ and $x$ according to the depths of vertices in $depth_u$ and $depth_v$, respectively. For example, a neighbour $x'$ of $x$ in $G^-$ is on the shortest path between $x$ and $u$ if $depth_u[x]-1=depth_u[x']$, and thus we find such $x'$ and compute shortest paths between $x'$ and $u$ in the same manner. 
If $d_u+d_v=d_{uv}^{\top}$, we have $d_{G^-}(u,v)\geq d_{uv}^{\top}$ and start a recover search (Lines 18-24). For each edge $(r,t)$ in the sketch $S_{uv}$ and $t\in \{u,v\}$, we search for all vertices $w$ with $depth_t[w]=\min\{\sigma_S(r,t)-1,d_t\}$ and $\sigma_S(r,t)=\delta_{wr} + depth_t[w]$ (Lines 19-23).} \rv{Each $w$ is a vertex closest to landmark $r$ among all vertices on at least one shortest path between $r$ and $t$ in our previous bi-directional search. $Z$ stores $(w,r)$ pairs to guide the recover searches.
In the recover search (Line 24), for each edge $(r,r')$ in $S_{uv}$ where $r, r'\in R$, we recover the shortest paths between $r$ and $r'$ according to $\mathcal{L}$. For each $(w,r)\in Z$, we find shortest paths between $w$ and $r$ according to $G^-$ and labelling information $\mathcal{L}$. For example, for a neighbour $w'$ of $w$ in $G^-$, $w'$ is on the shortest path between $w$ and $r$ if $(r, \delta_{w'r})\in L(w')$ and $\delta_{w'r}+1=\delta_{wr}$. The shortest paths between $w$ and $u$ (resp. $v$) is computed according to $depth_u[~]$ (resp. $depth_v[~]$), but the search for parts of shortest paths that have already been found in the reversed search can be skipped.}
% \q{Unlike PPL, this can add an edge into $G_{uv}$ at most once even if it may appear on several shortest paths.
\rv{We also compute the shortest paths between relevant landmarks.}

\vspace{-0.1cm}
\begin{example}
	Figure \ref{fig:example-search}(c)-(e) illustrates how our guided searching finds the answer for a query \rv{SPG(6,11)}. The sparsified graph $G^-$ is depicted in Figure \ref{fig:example-search}(a) and the sketch is depicted in Figure \ref{fig:example-search}(b). The sketch provides the upper bound $d^\top_{6,11}=5$, $d^*_6=0$ and $d^*_{11}=2$ because $\rv{\sigma_S(1,6)}=1$ and $\rv{\sigma_S(2,11)}=3$, respectively. The bi-directional BFS is depicted in Figure \ref{fig:example-search}(c), in which $d_6=2$, $d_{11}=3$, $P_6=\{5,7,8,14\}$, and $P_{11}=\{10,12,9,8\}$. The queues $P_6$ and $P_{11}$ meet at vertex $8$, and thus $d_{G^-}(6,11)=5$. \rv{The reverse search is depicted in Figure \ref{fig:example-search}(e), which goes back to $6$ and $11$ from $P_6\cap P_{11}=\{8\}$. The recover search is depicted in Figure \ref{fig:example-search}(d), which finds shortest paths going through the landmarks $\{1,2,3\}$ with $Z=\{(12,3),(9,2),(6,1)\}$ and recovers shortest paths between landmarks in the sketch.} The final query answer is depicted in Figure \ref{fig:example-search}(f).
	
\end{example}

\section{Theoretical Discussion}
\label{sec:proof}
We prove the correctness of QbS and analyze its complexity. We also discuss how to \rv{parallelize} the labelling construction process.

\begin{table*}[ht]
    \centering
    \begin{tabular}{l||ll|rrr|rrr|r}
    \toprule
        Dataset & Network & Type & $|V|$ & $|E|$ & $|E^{un}|$  & max. deg & avg. deg & avg. dist & $|G|$\\
        \midrule
        %Facebook & 4.1k & 88.2k & 43.69 \\
        Douban (DO) & social & undirected & 0.2M & 0.3M & 0.3M & 287 & 4.2 & 5.2 &  2.5MB\\ 
        % http://konect.uni-koblenz.de/networks/douban
        DBLP (DB) & co-authorship & undirected & 0.3M & 1.1M & 1.1M & 343 & 6.6 & 6.8 & 8.0MB\\
        % https://snap.stanford.edu/data/com-DBLP.html
        \midrule
        Youtube (YT) & social & undirected & 1.1M & 3.0M & 3.0M & 28,754 & 5.27 & 5.3 & 23MB\\
        % https://snap.stanford.edu/data/com-Youtube.html
        WikiTalk(WK) & communication & directed & 2.4M & 5.0M & 4.7M & 100,029 & 3.89 & 3.9 & 36MB\\
        % https://snap.stanford.edu/data/wiki-Talk.html
        Skitter (SK) & computer & undirected & 1.7M & 11.1M & 11.1M & 35,455 & 13.08 & 5.1 & 85MB\\
        % https://snap.stanford.edu/data/as-Skitter.html
        Baidu (BA) & web & directed & 2.1M & 17.8M & 17.0M & 97,848 & 15.89 & 4.1 & 130MB\\
        % http://konect.uni-koblenz.de/networks/zhishi-baidu-internallink
        % Flickr (FL) & social & directed & 2.3M & 33.1M & 22.8M & 27,937 & 19.83 & 4.7 & 174MB\\
        % http://konect.uni-koblenz.de/networks/flickr-growth
        LiveJournal (LJ) & social & directed & 4.8M & 68.5M & 43.1M & 20,334 & 17.79 & 5.5 & 329MB\\
        % http://konect.uni-koblenz.de/networks/soc-LiveJournal1
        %Stack (ST) & interaction & & 6.0M & 63.5M & & & & & \\
        %Facebook (FB) & social & & 58.8M & 92.2M & & & & & \\ 
        Orkut (OR) & social & undirected & 3.1M & 117M & 117M & 33,313 & 76.28 & 4.2 & 894MB\\
        % https://snap.stanford.edu/data/com-Orkut.html
        \midrule
        Twitter (TW) & social & directed & 41.7M & 1.5B & 1.2B & 2,997,487 & 57.74 & 3.6 & 9.0GB\\
        % http://konect.uni-koblenz.de/networks/twitter
        Friendster (FR) & social & undirected & 65.6M & 1.8B & 1.8B & 5,214 & 55.06 & 4.8 & 13.0GB\\
        % https://snap.stanford.edu/data/com-Friendster.html
        uk2007 (UK) & web & directed & 106M & 3.7B & 3.3B & 979,738 & 62.77 & 5.6 & 24.8GB\\
        % https://chato.cl/webspam/datasets/uk2007/links/
        ClueWeb09 (CW) & computer & directed & 1.7B & 7.8B & 7.8B & 6,444,720 & 9.27 & 7.5 & 58.2GB\\
        \bottomrule
    \end{tabular}
    \caption{Datasets, where $|E^{un}|$ is the number of edges in a graph being treated as undirected, and $|G|$ denotes the size of a graph $G$ with each edge appearing in the adjacency lists and being represented by 8 bytes.}
    \label{tab:datasets}\vspace*{-0.4cm}
\end{table*}

\subsection{Proof of Correctness}

\begin{comment}
We briefly prove the correctness of PPL.
\begin{theorem}
\ye{Given any $\forall$SPQ(u,v) query on a graph $G$, the answer $G_{uv}$ can be computed using the PPL method.}
\end{theorem}

\begin{proof}
We first prove that Algorithm 1 constructs a 2-hop path cover labelling. If $V(p)=\{u,v\}$ and $R'=\{u,v\}$, $depth[u] = d_G(u,v)$ in the pruned BFS from $v$ and $depth[v] = d_G(u,v)$ in the pruned BFS from $u$, thus we have $(v,\delta_{vu})\in L(u)$ and $(u,\delta_{uv})\in L(v)$, and $d_G(u,v)=\delta_{uu}+\delta_{uv}$. Otherwise, $R'=V(p)\backslash\{u,v\}$, among all vertices in $R'$, we assume that $w$ is the first in the labelling order. During the pruned BFS from $w$, $depth[u]=d_G(u,w)$ and $depth[v]=d_G(v,w)$. Thus we have $(w,\delta_{uw})\in L(u)$ and $(w,\delta_{vw})\in L(v)$ and we have $d_G(u,v)=\delta_{uw}+\delta_{vw}$.

We then prove that $G_{uv}$ can be constructed by recursive calling a 2-hop path cover labelling $L$. 
%If a labelling $L$ is a 2-hop path cover on $G$, then given a pair of vertices $(u,v)$:
It is a trivial case to derive $V(G_{uv})=\{u,v\}$ and $E(G_{uv})=\{(u,v)\}$ when $d_G(u,v)=1$ because $(v,1)\in L(u)$ and $(u,1)\in L(v)$. 
%$R_p=\{u,v\}$ and we have $(v,\sigma_{uv})\in L(u)$ or $(u,\sigma_{uv})\in L(v)$. Therefore we can infer the shortest path between $u$ and $v$ only consists the edge $(u,v)$.
When $d_G(u,v)>1$, for each path $p\in P^G_{uv}$ and $R'=V(p)\backslash \{u,v\}$, the 2-hop path cover guarantees that $\exists r\in R'$, $d_G(u,v)=\delta_{ur}+\delta_{vr}$ where $(r,\delta_{ur})\in L(u)$ and $(r,\delta_{vr})\in L(v)$. Therefore path $p$ can be derived by computing and concatenating the shortest path between $u,r$ and the shortest path between $r,v$ recursively until it is a trivial case.
\end{proof}
\end{comment}

In the following, we prove the theorem for the correctness of QbS.
\begin{theorem}
%Given any two vertices $u$ and $v$ in a graph $G$, the \textit{shortest~path~graph} $G_{uv}$ can be constructed by \y{guided search} in Algorithm \ref{alg:bounded-search}.
Given any query \rv{SPG(u,v)} on a graph $G$, the answer $G_{uv}$ can be computed using QbS.
\end{theorem}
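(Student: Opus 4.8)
The plan is to show QbS computes $G_{uv}$ exactly by partitioning the set $P_{uv}$ of all shortest $u$--$v$ paths into those lying entirely in the sparsified graph $G^-=G[V\setminus R]$ (\emph{landmark-free}) and those visiting at least one landmark (\emph{landmark-crossing}). Writing $G^{\mathrm{free}}$ and $G^{\mathrm{cross}}$ for the unions of edges of these two classes, every path is of exactly one type, so $G_{uv}=G^{\mathrm{free}}\cup G^{\mathrm{cross}}$. I would then identify $G^{\mathrm{free}}$ with the $G^-_{uv}$ produced by the search on $G^-$ and $G^{\mathrm{cross}}$ with the $G^{\mathcal{L}}_{uv}$ produced by the recover search, and show the case split of Eq.~\ref{equ:search} discards whichever class is empty, using the stated identity $d_G(u,v)=\min\{d_{G^-}(u,v),d^{\top}_{uv}\}$.

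First I would establish two lemmas about the labelling scheme. (i) \emph{The meta-graph preserves landmark distances}: $d_M(r,r')=d_G(r,r')$ for all $r,r'\in R$; ``$\geq$'' is the triangle inequality applied to the weights $\sigma(\cdot)=d_G(\cdot)$, and ``$\leq$'' follows by cutting a shortest $r$--$r'$ path in $G$ at the landmarks it visits, since each maximal landmark-free segment is a meta-edge. (ii) \emph{The sketch is correct}: for any landmark-crossing shortest path $P$, let $r$ be the first landmark on $P$ from the $u$ side and $r'$ the last. The prefix from $u$ to $r$ and suffix from $r'$ to $v$ are landmark-free shortest subpaths, so $(r,\delta_{ur})\in L(u)$ and $(r',\delta_{r'v})\in L(v)$ by Eq.~\ref{equ:path-labelling}, while the middle segment has length $d_G(r,r')=d_M(r,r')$ by (i); hence $|P|=\delta_{ur}+d_M(r,r')+\delta_{r'v}\geq d^{\top}_{uv}$, and conversely each minimizer of Eq.~\ref{equ:sketch} is realized by an actual path. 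This shows $d^{\top}_{uv}$ is exactly the minimum length of a landmark-crossing $u$--$v$ path and that $S_{uv}$ records every optimal pair $(r,r')$ together with the full meta-level shortest-path structure between them.

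Next I would prove soundness and completeness of the two searches. For $G^-_{uv}$: the bounded bi-directional BFS expands $G^-$ until the frontiers meet or the budget $d^{\top}_{uv}$ is reached; if $P_u\cap P_v\neq\emptyset$ then $d_{G^-}(u,v)=d_u+d_v\leq d^{\top}_{uv}$, and the reverse search keeps an edge $(x,x')$ exactly when $depth_t[x]-1=depth_t[x']$ for the appropriate side $t$, so it retains precisely the edges on shortest $G^-$-paths, i.e.\ exactly $G^{\mathrm{free}}$. For $G^{\mathcal{L}}_{uv}$: when $d_u+d_v=d^{\top}_{uv}$, the recover search stitches each crossing path from three recoverable fragments---a label-guided prefix/suffix (selecting $w'$ with $(r,\delta_{w'r})\in L(w')$ and $\delta_{w'r}+1=\delta_{wr}$), the BFS-depth portion already materialized by the bi-directional phase, and the meta-level paths between the landmarks stored in $S_{uv}$. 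Here the side budgets $d_t^{*}=\max_{(r,t)\in E_S}\sigma_S(r,t)-1$ of Eq.~\ref{equ:sidechoice} must be shown large enough that every ``seam'' vertex $w$ (one hop inside its nearest optimal landmark) is reached during the bi-directional phase, so that no crossing edge is missed, while the label and depth constraints forbid any non-shortest edge.

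Finally I would finish the case analysis: by $d_G(u,v)=\min\{d_{G^-}(u,v),d^{\top}_{uv}\}$, the first case of Eq.~\ref{equ:search} holds iff $d_G(u,v)=d^{\top}_{uv}<d_{G^-}(u,v)$, i.e.\ $G^{\mathrm{free}}=\emptyset$; the third case iff $G^{\mathrm{cross}}$ contributes nothing to the optimum; and the second iff both classes are nonempty---so in every case the formula returns exactly $G^{\mathrm{free}}\cup G^{\mathrm{cross}}=G_{uv}$. I expect the main obstacle to be the completeness of the recover search: one must prove that gluing the label-based, BFS-based, and meta-based fragments at the seam vertices recovers \emph{every} landmark-crossing shortest path and only shortest edges. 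This needs a careful induction on distance from $u$ and $v$ together with an argument that the bounds $d_t^{*}$ leave no gap between the bi-directional frontier and the landmark reachable from the labels. The meta-graph and sketch lemmas and the reverse-search argument are comparatively routine.
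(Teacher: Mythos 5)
Your proposal is correct in outline and follows the same overall route as the paper's proof: both decompose $G_{uv}$ into landmark-free shortest paths (computed by the bi-directional search plus reverse search on $G^-$) and landmark-crossing shortest paths (computed by the recover search guided by $S_{uv}$), and both settle the case split of Eq.~\ref{equ:search} via $d_G(u,v)=\min\{d_{G^-}(u,v),d^{\top}_{uv}\}$. The difference is in the division of labor. The paper's sketch spends its first two parts verifying that Algorithm~\ref{alg:labels} and Algorithm~\ref{alg:sketch} conform to Definitions~\ref{def:scheme} and~\ref{def:sketch} (the two-queue BFS argument), and then merely \emph{asserts} that every landmark-crossing shortest path is captured by $S_{uv}$. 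You take the definitions as given and instead supply precisely the semantic content that assertion needs: your lemma (i) that $d_M(r,r')=d_G(r,r')$, and your first/last-landmark lemma (ii) showing that for any crossing shortest path the prefix and suffix are landmark-free shortest subpaths, hence $(r,\delta_{ur})\in L(u)$ and $(r',\delta_{r'v})\in L(v)$ by Eq.~\ref{equ:path-labelling}, so that $d^{\top}_{uv}=d_G(u,v)$ whenever some shortest path crosses a landmark. This makes the soundness of sketching genuinely provable rather than asserted, at the cost of omitting the algorithm-conformance step, which a complete proof of the theorem (about QbS as implemented) still requires, so you should retain the paper's verification that Algorithm~\ref{alg:labels} actually produces the labelling of Definition~\ref{def:scheme}. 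Two small repairs: your claim that $d^{\top}_{uv}$ is exactly the minimum length of a landmark-crossing \emph{path} should say \emph{walk} in general, since the concatenation realizing a minimizer of Eq.~\ref{equ:sketch} may repeat vertices; this is harmless because a walk of length $d_G(u,v)$ is necessarily a simple shortest path, which is the only case Eq.~\ref{equ:search} actually uses. Finally, the obstacle you flag --- completeness of the recover search, i.e.\ that the seam set $Z$, the budgets $d^{*}_t$, and the precomputed landmark-to-landmark shortest path graphs jointly miss no crossing edge --- is real, but note that the paper's own proof sketch also leaves exactly this point at the level of assertion, so your plan is no less rigorous than the published argument on that front.
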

\begin{sproof}
We first prove that a labelling scheme constructed by Algorithm \ref{alg:labels} satisfies Definition \ref{def:scheme}. Suppose that we conduct a BFS rooted from $r\in R$. Given a landmark $r'\in R\backslash \{r\}$, if $\exists p\in P_{rr'} (V(p)\cap R=\{r,r'\})$ holds, there must exist $w\in Q_L$ with $depth[w]+1=depth[r']$ and $(w,r')\in E$ (Lines 8-9, 11), and accordingly an edge $(r,r')$ is added into $M$ (Lines 13-14). Otherwise, $r'$ is directly pushed into $Q_N$ (Lines 19-21). Given a vertex $v\in V\backslash R$ that is not a landmark, if $\exists p\in P_{rv} (V(p)\cap R=\{r\})$ holds, there must exist $w\in Q_L$ with $depth[w]+1=depth[v]$ and $(w,v)\in E$ (Lines 8-9, 15), and accordingly a label $(r,depth[v])$ is added into $L$ (Lines 16-17). Otherwise, $v$ is directly pushed into $Q_N$ (Lines 19-21). % \yu{more direct connection to Definition 4.2?}

Now we prove that a sketch constructed by Algorithm \ref{alg:sketch} satisfies Definition \ref{def:sketch}. First, Algorithm \ref{alg:sketch} (Lines 2-7) finds pairs of landmarks $(r,r')$ that minimise $\{\delta_{ur} + d_M(r,r') +\delta_{r'v}|\rv{(r,\delta_{ur})}\in L(u)$ and $(r',\delta_{r'v})\in L(v)\}$ (i.e., satisfying Eq. (3) in Definition \ref{def:sketch}). Then it adds $(u,r),(r',v)$ and all edges on the shortest paths between $(r,r')$ on a meta-graph into the sketch (Lines 8-12). % \yu{there might be a gap to be connected to Definition 4.5?}

Finally, we prove that $G_{uv}$ can be constructed by Algorithm \ref{alg:bounded-search}. Each shortest path between $u$ and $v$ that does not go through any landmark can be constructed from $G^-$ using a bi-directional BFS and its reverse search (Lines 6-15 \rv{and 16-17}). For each shortest path between $u$ and $v$ that goes through at least one landmark, \rv{all such landmarks must be included in $S_{uv}$ and such shortest paths are computed using the recover search (Lines 18-24)}.
%If $d_G(u,v)=d_{G^-}(u,v)<d_S(u,v)$, all shortest paths do not go through landmarks. If $d_G(u,v) = d_S(u,v) < d_{G^-}(u,v)$, all shortest paths go through landmarks. If $d_{G^-}(u,v)=d_G(u,v)=d_S(u,v)$, some shortest paths go through landmarks while some do not.
\end{sproof}

\subsection{Complexity Analysis}\label{subsec:complexity} The time complexity of constructing a BFS from one landmark in Algorithm \ref{alg:labels} is $O(|E|)$ and the overall time complexity of Algorithm \ref{alg:labels} is $O(|R||E|)$.  The time complexity of constructing a sketch in Algorithm \ref{alg:sketch} is $O(|R|^4)$ and can be reduced to $O(|R|^2)$ by precomputing shortest path distances and shortest paths between landmarks on a meta-graph constructed by Algorithm \ref{alg:sketch}, i.e., computation on Lines 10-12 is saved. The time complexity of conducting a guided search in Algorithm \ref{alg:bounded-search} is $O(|E|+|R||V|)$. 

Note that, in our work, the number of landmarks is small, i.e., $|R|=20$ by default, which is much smaller than the number of vertices or edges in the original graph. Thus, we can see that, constructing a labelling scheme by Algorithm \ref{alg:labels} is indeed $O(|E|)$, computing a sketch is constant time, and performing a guided search becomes $O(|E^*|+|V|)$ where $|E^*|$ denotes the number of edges in the sparsified graph after removing edges incident to landmarks from $G$.

\subsection{Parallelization} 

Given a graph $G$ and a set of landmarks $R$ in $G$, a nice property of our labelling scheme $\mathcal{L}$ is that there is only one such labelling scheme. Formally, we prove the lemma below.
\begin{lemma}\label{lem:deterministic}
Let $\mathcal{L}$ be a labelling scheme on a graph $G$ w.r.t. a set of landmarks $R$. $\mathcal{L}$ is deterministic.
\end{lemma}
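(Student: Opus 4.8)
The plan is to establish that the labelling scheme $\mathcal{L}=(M,L)$ is a function of the pair $(G,R)$ alone, so that no ordering of the landmarks and no tie-breaking during the breadth-first searches can change the output. I would split the argument into the two components $M$ and $L$, showing that each is pinned down by a condition that is an intrinsic property of $G$ and $R$.

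First, I would treat the meta-graph $M=(R,E_R,\sigma)$. The vertex set $R$ is given. By Definition \ref{def:metaG}, a pair $(r,r')$ lies in $E_R$ if and only if there exists a shortest path between $r$ and $r'$ in $G$ whose only landmark vertices are $r$ and $r'$; this is a fixed yes/no property of the pair, determined entirely by $G$ and $R$. Likewise $\sigma(r,r')=d_G(r,r')$ is the unique shortest-path distance. Hence $E_R$ and $\sigma$, and therefore $M$, are uniquely determined.

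Second, I would treat the path labelling $L$. By Definition \ref{def:scheme}, for each non-landmark vertex $u$ the entry $(r,\delta_{ur})$ belongs to $L(u)$ if and only if $\delta_{ur}=d_G(u,r)$ and there exists a shortest path $p\in P_{ur}$ with $V(p)\cap R=\{r\}$. Both the distance value and the existence of such a landmark-free path are intrinsic to $G$ and $R$, so each $L(u)$, and hence $L$, is uniquely determined. Combining the two parts shows that there is exactly one $\mathcal{L}$ satisfying the definitions, i.e. $\mathcal{L}$ is deterministic.

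To connect this with parallelization, I would then verify that Algorithm \ref{alg:labels} realises this unique $\mathcal{L}$ under any schedule. The essential observations are that the BFS rooted at $r_i$ consults only $G$ and never the labels or meta-edges produced by other landmarks, so the per-landmark searches are mutually independent; that the depth assigned to each vertex is its BFS distance from $r_i$, which is unique; and that, because $Q_L$ is expanded before $Q_N$ at every level, a vertex is discovered through $Q_L$ exactly when it has a level-$(depth[v]-1)$ predecessor already in $Q_L$, equivalently when a landmark-free shortest path to it exists. The main obstacle is precisely this last point: I would prove by induction on the BFS level that the ``$Q_L$ before $Q_N$'' discipline makes the decision of whether a vertex is reached through $Q_L$ — and therefore the emitted label (for a non-landmark) or meta-edge (for a landmark) — invariant under the order in which neighbours are scanned, since any qualifying predecessor suffices and all contribute the identical entry $(r_i,depth[v])$. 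Once this invariant is in place, the output of Algorithm \ref{alg:labels} is independent of all scheduling choices and equals the unique $\mathcal{L}$, so the BFSs may be run concurrently.
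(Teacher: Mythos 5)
Your proof is correct and takes essentially the same approach as the paper: the paper's own proof sketch likewise observes that membership of an edge in $E_R$ and of an entry $(r,\delta_{ur})$ in $L(u)$ are each characterized by an intrinsic condition on $G$ and $R$ (existence of a shortest path avoiding other landmarks), so $\mathcal{L}$ is uniquely determined. Your additional third paragraph, verifying that Algorithm \ref{alg:labels} computes this unique $\mathcal{L}$ under any schedule, goes beyond what the paper proves for this lemma (that algorithmic correctness is handled separately, in the proof of Theorem 5.1) but is sound and consistent with it.
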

\begin{sproof}
A labelling scheme $\mathcal{L}$ \rv{consists of a meta-graph $M=(R,E_R,\sigma)$ and a path labelling $L$.} 
From Definition \ref{def:metaG}, an edge $(r,r')\in E_R$ if and only if there exists at least one shortest path between $r$ and $r'$ that does not go through any other landmarks in $R\backslash \{r,r'\}$. From Definition \ref{def:scheme}, a label $(r,\delta_{ur})\in L(u)$ if and only if there exists at least one shortest path between $u$ and $r$  that does not go through any other landmarks in $R\backslash\{r\}$. Therefore, $\mathcal{L}$ is deterministic w.r.t $G$ and $R$.
\end{sproof}

For a fixed set of landmarks, the labelling construction in Algorithm \ref{alg:labels} yields the same labelling scheme, regardless of the ordering of landmarks. This deterministic nature of labelling scheme enables us to speed up the construction of labelling scheme by paralleling Algorithm \ref{alg:labels}. If we use one thread for constructing labels from one landmark, then we can leverage the thread-level parallelism to perform BFSs from different landmarks simultaneously.

\section{Experiments}\label{sec:experiments}
We evaluated our method $QbS$ to answer the following questions: 
\begin{itemize}
\item[(\textbf{Q1})] How efficiently can our proposed method answer \rv{shortest-path-graph} queries, while still achieving construction time efficiency and low labelling space overhead? 
\item[(\textbf{Q2})] How well can sketching help improve the performance of answering \rv{shortest-path-graph} queries? 
\item[(\textbf{Q3})] How does the number of landmarks affect the performance such as construction time, labelling size and query time?
%    \item[Q4] How efficient does \q{(too vague) parallel construction speed up the construction time?}
%How scalable is our proposed method in dealing with very large networks?
\end{itemize}

\subsection{Experimental Setup}
We implemented our proposed methods in C++ 11 and compiled using g++. We performed all experiments on a Linux \rv{server} which has Intel Xeon W-2175 with 2.5GHz and 512GB of main memory.
\vspace*{-0.4cm}
\begin{figure}[h]
    %\centering
    \includegraphics[width=8cm]{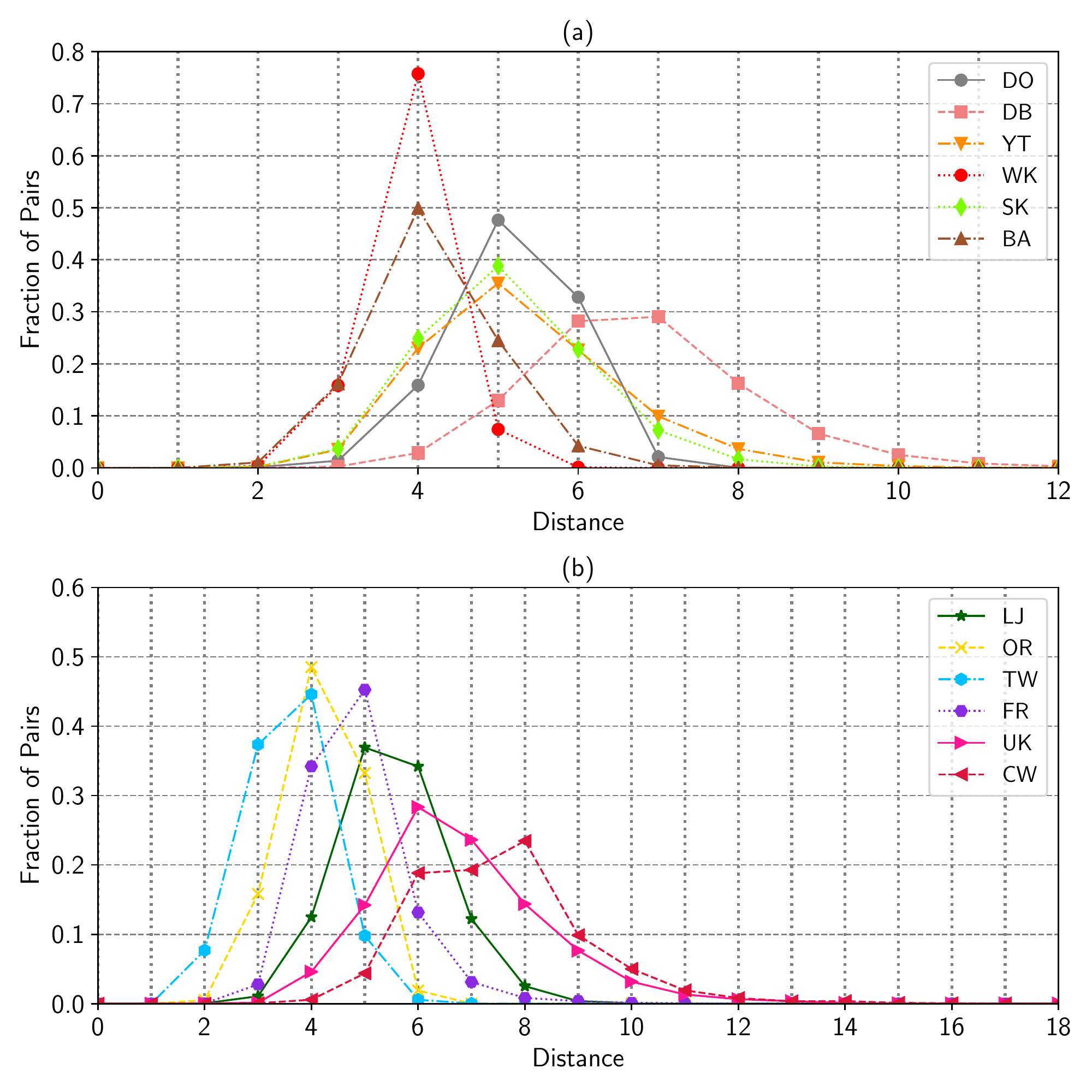}\vspace{-0.3cm}
    \caption{Distance distribution of 10,000 randomly selected pairs of vertices on all the datasets.}
    \label{fig:distance-distribution}\vspace{-0.5cm}
\end{figure}

\medskip
\noindent\textbf{Datasets. }We conducted experiments on 12 real-world graph datasets from various types of complex large networks, including social networks, computer networks, web networks, co-authorship networks and communication networks. Table \ref{tab:datasets} presents the details of these datasets, among which the largest one has 1.7 \rv{billion} vertices and 7.8 \rv{billion} edges. We treated graphs in these datasets as being undirected. All the datasets used in our experiments are publicly available from Koblenz Network Collection \cite{kunegis2013konect}, Stanford Networks Analysis Project \cite{leskovec2014snap}, Dynamically Evolving Large-scale Information Systems Project \footnote{See http://law.di.unimi.it/datasets.php for datasets} and the Lemur Project\footnote{See https://lemurproject.org/clueweb09/index.php}.

\begin{table*}[ht!]
    \centering
    \begin{tabular}{l||rr|rr||r|rr|r}
        \toprule
        \multirow{2}*{Dataset \hspace{1cm}}& \multicolumn{4}{c||}{Construction Time (sec.)} & \multicolumn{4}{c}{\rv{Average Query Time} (ms.)}\\
        \cline{2-9}
         & QbS-P & QbS & \hspace{0.3cm} PPL & ParentPPL  & \hspace{0.5cm}  QbS & PPL & ParentPPL & \hspace{0.5cm} Bi-BFS \\
        \midrule
         Douban & 0.05 & 0.3 & 154 & 2,736 & 0.037 & 1.414 & 0.038 & 0.585 \\ 
         DBLP & 0.12 & 1.1 & 2,610 & 11,049 & 0.097 & 1.782 & 0.052 & 2.995  \\ \hline
         Youtube & 0.47 & 4.4 & 22,601 & DNF & 0.218 & 5.314 & - & 23.809\\ 
         WikiTalk & 0.61 & 4.9 & 8,662 & DNF & 0.693 & 3.536 & - & 6.984 \\ 
         Skitter & 1.51 & 12.7 & 86,326 & DNF & 0.951 & 16.978 & - & 44.685 \\ 
         Baidu & 2.04 & 18.9 & DNF & \rv{OOE} & 0.845 & - & - & 174.412\\ 
         LiveJournal & 6.48 & 52.2 & DNF & \rv{OOE}  & 1.095  & - & - & 84.967 \\ 
         Orkut & 10.85 & 73.2 & DNF & \rv{OOE} & 4.237 & - & - & 207.541 \\\hline
         Twitter & 199.8 & 1,345 & DNF & \rv{OOE}  & 164.333 & - & - & 4,817.774 \\ 
         Friendster & 416.5 & 2,354 & DNF & \rv{OOE} & 11.972 &  - & - & 3,600.362 \\  
         uk2007 & 178.5 & 1,485 & \rv{OOE} & \rv{OOE} & 77.830 & - & - & 5,264.101 \\ 
         ClueWeb09 & 1,819 & 17,060 & \rv{OOE} & \rv{OOE} & 480.443 & - & - & \rv{DNF} \\ 
        \bottomrule
    \end{tabular}
    \caption{\rv{Comparison of construction time and query time. DNF and OOE refer to running out of time (>24 hours) and running out of memory, respectively.} }
    \label{tab:compare_construction_time}\vspace*{-0.7cm}
\end{table*}

\medskip
\noindent\textbf{Queries. }We randomly sampled 10,000 pairs of vertices from all pairs of vertices in each graph to evaluate the average query time. %Table \ref{tab:datasets} shows the average distance of these 10,000 randomly sampled pairs of vertices in each graph dataset.
Figure \ref{fig:distance-distribution} shows the distance distribution  of these 10,000 randomly sampled pairs of vertices in each graph dataset. We can see that the distances of these pairs of vertices mostly fall into the range of 2-9.

\medskip
\noindent\textbf{Baselines. } 
We considered the following baselines:
\begin{itemize}
\item[(1)] \textbf{Labelling-based methods}. Pruned landmark labelling (PLL) is the state-of-the-art method for computing exact distance queries \cite{akiba2013fast}. We thus use the methods \emph{Pruned Path Labelling} (PPL) and \emph{Pruned Path Labelling with Parent information} (ParentPPL) as discussed in Section \ref{sec:labelling-based-methods} as our baselines. 
\item[(2)] \textbf{Search-based methods.} We use bi-directional BFS as the baseline which conducts search from the directions of two vertices alternatively \cite{goldberg2005computing}. We denote it as Bi-BFS.% and the other uses an optimized expansion strategy to expand search from the direction where less vertices are traversed \cite{hayashi2016fully}. We denote the first method as Bi-BFS and the second method as {OptBi-BFS}. }
\end{itemize}\vspace*{-0.1cm}
To evaluate the parallel speed-up of construction time, we use QbS to refer to our method with a sequential labelling construction and QbS-P to refer to our method with a parallel labelling construction, with up to 12 threads in our experiments. In PPL and ParentPPL, we use 32 bits and 8 bits to represent a landmark and a distance in their labels, respectively, and 32 bits to store each parent in ParentPPL. In QbS and QbS-P, we use $|R|$*8 bits to store the label of each vertex.

\medskip
\noindent \textbf{Landmarks.}
In PPL and ParentPPL, landmarks are ordered in descending order of degrees. In QbS, we choose vertices with the largest degrees as landmarks for two reasons: (1) removing high-degree vertices sparsifies a graph much more than low-degree vertices; (2) computing distances from two vertices to high-degree landmarks provides a good estimation of the shortest distance between these two vertices \cite{potamias2009fast}. We set $|R|=20$ in QbS by default.% and choose vertices with the largest degrees as landmarks.

%In PPL and ParentPPL, our vertex ordering strategy is to sort vertices in descending order of degrees. In QbS, we chose top $|R|$ vertices as landmarks after sorting where the number of landmarks $|R|$ is given. 

%\y{For sequential version of our proposed method QbS, we use up to 12 threads for parallel version of labelling construction.}

\subsection{Performance Comparison}
We conducted experiments to compare construction time, labelling size and query time of our method against the baselines. %on all the datasets. The results are presented in Tables \ref{tab:compare_construction_time}-\ref{tab:label}. 

\subsubsection{Construction Time}

Table \ref{tab:compare_construction_time} shows that our method QbS can efficiently construct a labelling scheme on all the datasets, scaling over 
large networks with billions of vertices and edges. Compared with PPL and ParentPPL, our method QbS uses a significantly less amount of time (i.e., 2-4 orders of magnitude faster) to construct labelling information. Moreover, PPL failed to construct labels for 7 out of 12 datasets and ParentPPL failed for 10 out of 12 datasets. This is because these methods need to meet the 2-hop path cover property. The reason why ParentPPL is much slower than PPL is because a vertex often has more than one parent and finding all parents takes more time though the time complexity remains unchanged. We can also see that, compared with QbS, QbS-P can further improve construction time (i.e., 6-12 times faster), leading to much better scalability than QbS.

\subsubsection{Labelling Size }\label{subsubsec:labellingsize-20}

Table \ref{tab:label} presents the comparison results for the labelling sizes of QbS, PPL and ParentPPL on all the datasets. \rv{We use $size(\Delta)$ to denote the size of precomputed shortest path graphs between landmarks as discussed in Section \ref{subsec:complexity}.} We observe that: 1) the labelling sizes of QbS are hundreds of times smaller than the labelling sizes of PPL and ParentPPL; 2) the labelling sizes of ParentPPL are about twice as the labelling sizes of PPL. For dense graphs, such as Twitter, the sizes of precomputed shortest paths in QbS are relatively larger than the ones in sparse graphs. This is due to the existence of many shortest paths between landmarks in dense graphs. Nonetheless, it is important to notice that, the sizes of precomputed shortest paths between landmarks (i.e. \rv{$size(\Delta)$} in Table \ref{tab:label}) are small in QbS, compared with the sizes of labelling (i.e. \rv{$size(\mathcal{L})$} in Table \ref{tab:label}). For meta-graphs, since 
each meta-graph contains at most $|R|^2$ edges, the space overhead for storing edges and weights of a meta-graph is very small. Indeed, even when we have $|R|$=100, the size of a meta-graph would still be smaller than 0.01MB. 
In summary, these results show that QbS can scale well over very large networks in terms of the labelling size.
%The average number of labels for each vertex in PPL or ParentPPL is mainly related to how dense the graph is. However, several vertex has less than $R$ vertices in QbS and the labelling size of QBS grows linearly, thus makes QBS scalable on large networks. 

\begin{table}[ht!]
 %   \centering
    \begin{tabular}{l||rr|rr}
        \toprule
         \multirow{2}*{Dataset} & \multicolumn{2}{c|}{QbS} &\multirow{2}*{PPL}&\multirow{2}*{ParentPPL} \\
        \cline{2-3}
          & \rv{$size(\mathcal{L})$}\hspace{0cm} & \rv{$size(\Delta)$}\hspace{0cm} &  &\\
        \midrule
          Douban& 2.95MB & 0.03MB  & 0.4GB & 0.8GB\\
          DBLP & 6.05MB & 0.03MB & 1.2GB & 2.4GB \\\hline
          Youtube  & 21.6MB & 0.6MB & 1.7GB & $-$ \\
          WikiTalk & 45.7MB & 0.7MB & 2.1GB & $-$ \\
          Skitter & 32.4MB & 20.3MB & 9.2GB & $-$ \\
          Baidu & 40.8MB & 4.8MB & $-$ & $-$ \\
          %Flickr & 18.6 & - & - & 43.9MB & - & - & \\
          LiveJournal & 92.5MB & 1.1MB & $-$ & $-$ \\
          Orkut & 58.6MB & 3.5MB & $-$ & $-$ \\\hline
          Twitter & 0.78GB & 0.76GB & $-$ & $-$ \\
          Friendster & 1.22GB & 0.01GB & $-$ & $-$ \\
          uk2007 & 1.98GB & 0.08GB & $-$ & $-$ \\
          ClueWeb09 & 31.4GB & 0.48GB & $-$ & $-$ \\
        \bottomrule
    \end{tabular}
    \caption{Comparison of labelling sizes. \rv{$size(\mathcal{L})$ denotes the size of a labelling scheme $\mathcal{L}$ and $size(\Delta)$ the size of precomputed shortest-path graphs between landmarks in QbS. %$-$ indicates that no labelling can be constructed within 24 hours.
    } 
   % We omit the labelling size of $M$ because it is smaller than 0.01MB for any dataset (see details in Section \ref{subsubsec:labellingsize-20}).
   }\label{tab:label}\vspace*{-0.5cm}
\end{table}

\begin{figure*}[ht]
    \centering
    \includegraphics[width=17.5cm]{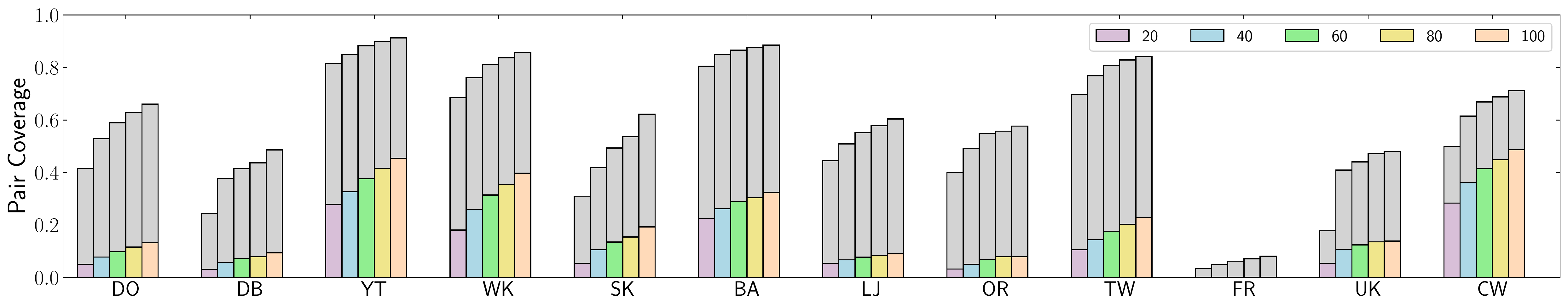}\vspace{-0.3cm}
    \caption{Pair coverage ratios using our method QbS under 20-100 landmarks where light color denotes the ratio of all the shortest paths between a vertex pair go through landmarks and grey color denoted the ratio of some but not all shortest paths between a vertex pair go through landmarks. }
    \label{fig:coverage}\vspace{-0.3cm}
\end{figure*}
\subsubsection{Query Time }
\begin{comment}
\begin{table}[ht]
    \centering
    \begin{tabular}{l||r|rr|r}
        \toprule
        \multirow{2}*{Dataset} & \multicolumn{4}{c}{Query Time (ms.)}\\
        \cline{2-5}
         & QbS & PPL & ParentPPL & BiBFS \\
        \midrule
         Douban  & 0.037 & 1.414 & 0.038 & 0.358 \\ 
         DBLP & 0.097 & 1.782 & 0.052 & 0.799 \\ 
         Youtube & 0.218 & 5.314 & - & 3.903 \\ 
         WikiTalk & 0.693 & 3.536 & - & 4.726 \\ 
         Skitter & 0.951 & 16.978 & - & 12.148 \\ 
         Baidu & 0.845 & - & - & 17.228 \\ 
         %& Flickr & 1.85 & 17.3 & DNF & DNF & 1.089 & - & - & 4.898 \\ 
         LiveJournal & 1.095 & - & - & 13.088 \\ 
         Orkut & 4.237 & - & - & 56.869 \\
         Twitter & 164.333 & - & - & 2,406.060 \\  
         Friendster & 11.972 & - & - & 150.802 \\ 
         uk2007 & 77.830 & - & - & 408.081 \\ 
         ClueWeb0 & 480.443 & - & - & 4,555.900 \\ 
        \bottomrule
    \end{tabular}
    \caption{Comparisons of query times.}
    \label{tab:compare_query_time}
\end{table}
\end{comment}

\begin{figure*}[ht]
    \centering
    \includegraphics[width=17.5cm]{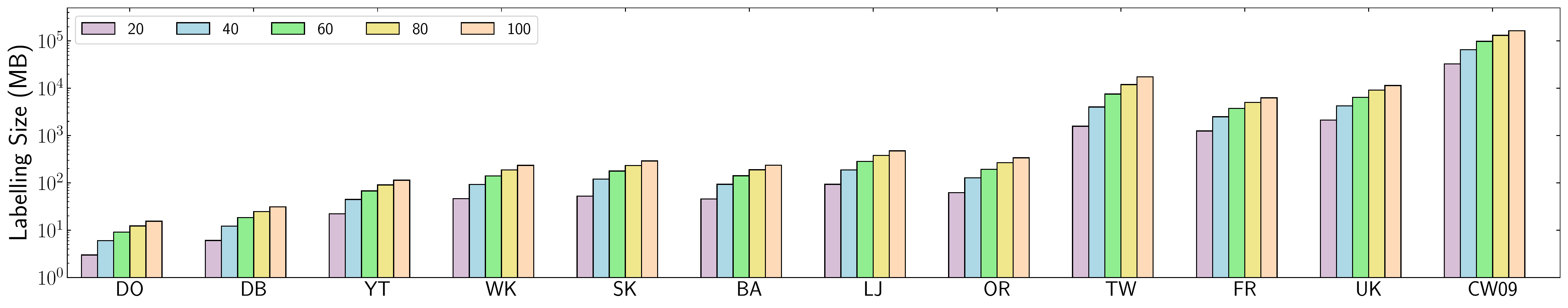}\vspace{-0.4cm}
    \caption{Labelling sizes using QbS under 20-100 landmarks on all the datasets.}
    \label{fig:label-size}\vspace{-0.1cm}
\end{figure*}

Table \ref{tab:compare_construction_time} presents the comparison results of our method with the baselines in terms of query time. Compared with the search-based method Bi-BFS, %methods Bi-BFS and OptBi-BFS, 
our method QbS can answer queries much more efficiently, i.e., 10-300 times faster than Bi-BFS. % and 5-10 times faster than OptBi-BFS. 
Particularly, QbS is able to answer queries within milliseconds for 8 out of 12 datasets, and less than 0.5 seconds for the other datasets which have up to 1.7 billion vertices and 7.8 billion edges. We notice that, Twitter has significantly higher query time than Friendster and uk2007. This is because, compared with the other graphs, Twitter has larger shortest path graphs as shown by $size(\Delta)$ in Table \ref{tab:label} due to densely connected vertices with very high degrees.  %We can also infer that the average distance is smaller for dense graphs and query times on dense graphs is larger. The query time of WikiTalk is larger than Skitter although Skitter has a larger average degree.
For labelled-based methods, the query times of both PPL and ParentPPL are much faster than Bi-BFS. % but slower than OptBi-BFS. The parent information in ParentPPL accelerates query times, which} are significantly smaller than Bi-BFS. 
However, neither PPL nor ParentPPL is scalable. PPL can only answer queries for the first 5 datasets, while ParenetPPL can only answer queries for the first 2 datasets which have less than 1 million vertices. This is because that constructing labelling information required by these methods is computationally expensive for very large graphs.

\subsection{Effects of Sketching}
We conducted an experiment to understand how sketching improves the performance of query answering in our method.

Figure \ref{fig:coverage} presents the pair coverage ratios of our method QbS using 20-100 landmarks. Here, pair coverage ratio refers to the proportion of queries in which the shortest paths between two vertices go through at least one landmark, among 10,000 queries used in our experiments. We distinguish two cases: \rv{(i) Queries in which \emph{all shortest paths} between two vertices go through at least one landmark; (ii) Queries in which \emph{some but not all shortest paths} between two vertices go through at least one landmark. Pair coverage ratios reflect the effectiveness of sketching used in our method QbS since a sketch cannot guide queries in which none of shortest paths between two vertices go through landmarks.}

%\begin{itemize}
%    \item[(i)] Queries in which \emph{all shortest paths} between two vertices go through at least one landmark;
%    \item[(ii)] Queries in which \emph{some but not all shortest paths} between two vertices go through at least one landmark.
%\end{itemize} 

From Figure \ref{fig:coverage}, we can see that: (1) When the number of landmarks increases, the pair coverage ratios go up for both Case (i) and Case (ii); nonetheless, the increasing rate generally slows down. (2) For datasets in which graphs have high degree vertices compared with their other vertices, such as Youtube, WikiTalk, Baidu, Twitter, and ClueWeb09, their pair \rv{coverage} ratios are generally higher than the other datasets. This is because these high degree vertices are more likely on the shortest paths of the other vertices. For Friendster, as it does not have   high degree vertices, the pair coverage ratios are quite low. (3) For datasets in which graphs are sparse after removing landmarks that are vertices of high degrees, such as Youtube, WikiTalk, Baidu and ClueWeb09, the percentage of pair coverage ratio for Case (i) among pair coverage ratios for both cases is higher than the other datasets. In Friendster, the degrees of vertices are more evenly distributed; hence, landmarks hardly capture all shortest paths between two vertices and the pair coverage ratios for Case (i) are extremely low. \rv{However, the reasons why query time on Friendster is still fast are twofold: (1) QbS does not store parent information for reverse search since most parent vertices do not lead to shortest paths being recovered, and (2) QbS uses sketches to guide which side to expand for bi-directional searches.}

%there are more vertex pairs between which the shortest paths go through landmarks and less vertex pairs between which the shortest paths never go through any landmarks. 
 %\q{(?) Generally, our upper bound for the guided search works effectively for most datasets. However, since the highest degree in Friendster is smaller than in other datasets, the sketch can not work as good it is in other datasets.
%On the other hand, the better ratio of at least one shortest path between a vertex pair goes through landmarks, the more possible that landmarks appear in Bi-BFS or opt-BFS search, and more useful our sketch during recover search.}
%The shortest paths between a vertex and a landmark is easier to recover according to the label information since we do not need to check the neighbors of the landmark or store any parent information. \q{(how does this relate to query time? any correlation?)}

%\subsubsection{Recover Search}
%As discussed above, we verified that the QbS provides a good upper bound for our bi-directional search. But for some pairs of vertices, their shortest distances are smaller than the upper bounds and no recover search is needed. What about we only use the sketch for upper bound?

\subsection{Performance with Varying Landmarks}
We also conducted experiments to evaluate how the number of landmarks may affect the performance of our method. %Figures \ref{fig:construction-time}, \ref{fig:label-size} and \ref{fig:query-time} depict the results, in which $|R|$ is set to $\{20,40,60,80,100\}$.

\begin{figure*}
    \centering
    \includegraphics[width=16.5cm]{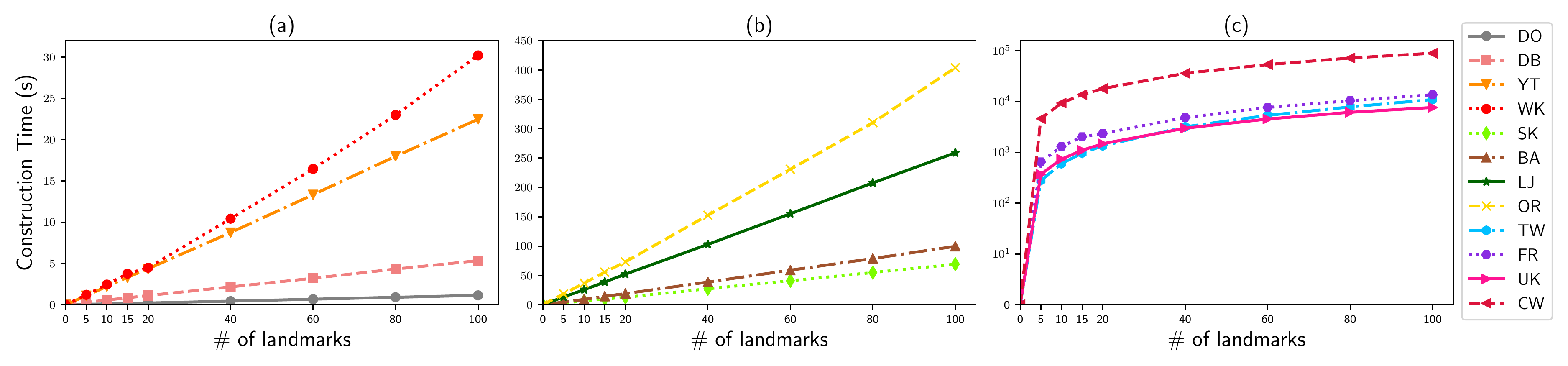}\vspace{-0.3cm}
    \caption{\rv{Construction times using QbS under 0-100 landmarks on all the datasets.}}
    \label{fig:construction-time}
%\end{figure*}
%\begin{figure*}[t!]\vspace{-0.3cm}
    \centering
    \includegraphics[width=16.5cm]{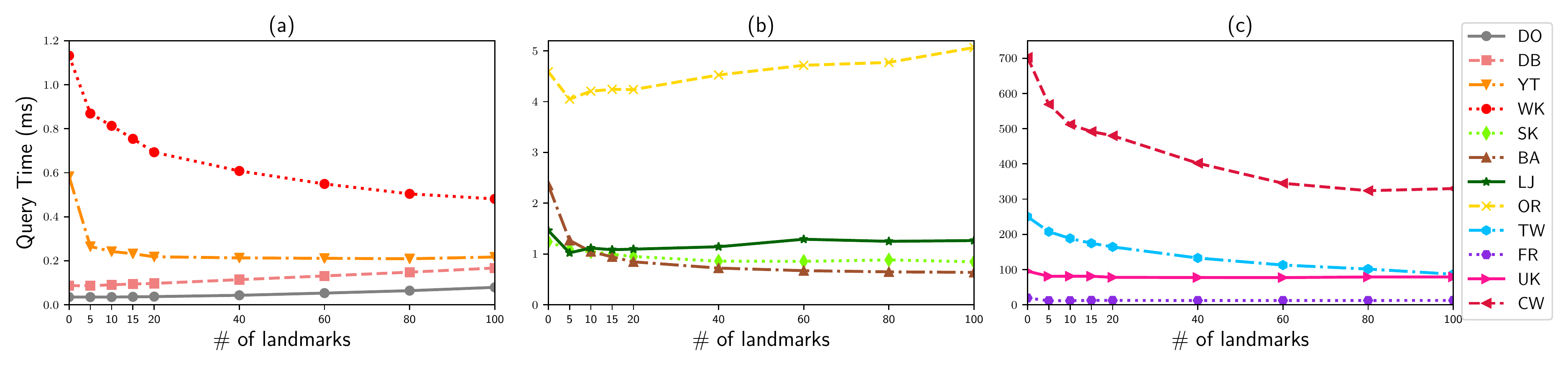}\vspace{-0.3cm}
    \caption{\rv{Average query times using our method QbS under 0-100 landmarks on all the datasets.}}\vspace{-0.3cm}
    \label{fig:query-time}
\end{figure*}

\vspace{-0.1cm}
\subsubsection{Construction Time}
The construction times of our method QbS against different numbers of landmarks (from 20 to 100) are shown in Figure \ref{fig:construction-time}. Generally, the construction time grows linearly. %when the size of a graph increases, there may exist more shortest paths between landmarks and computing these shortest paths can thus take longer time. 
In Figure \ref{fig:construction-time}(a)-(b), for datasets with millions of edges, QbS can construct labels under 100 landmarks within at most a few minutes. In Figure \ref{fig:construction-time} (c), for datasets with billions \rv{of} edges, QbS can construct labels within a few hours. It can be seen that the construction time is almost linear in the number of landmarks on each dataset, which confirms the scalability of QbS.

\vspace{-0.1cm}
\subsubsection{Labelling Size}
We compared the labelling sizes of QbS against different numbers of landmarks in Figure \ref{fig:label-size}. For a labelling scheme $\mathcal{L}=(M,L)$, we use $|R|$*8 bits to store labels of each vertex. For $M$, as discussed in Section \ref{subsubsec:labellingsize-20}, the labelling size of a meta-graph is very small, compared with the labelling size of $\Delta$ and $L$. It increases when the number of landmarks becomes larger. Nonetheless, even when $|R|$=100, the labelling size of a meta-graph would still be smaller than 0.01MB.
For $\Delta$, since we store the shortest paths between $|R|^2$ pairs, it grows fast when the number of landmarks increases. 
However, compared with the size of labels in $L$ as shown in Table \ref{tab:label}, $\Delta$ is small. The sizes of shortest paths between vertices with lower degrees are smaller than the ones between vertices with higher degrees. Thus, the labelling size of $\Delta$ does not increase quadratically in the number of landmarks. The sizes of path labelling $L$ are linear in terms of the number of landmarks. % But the path sizes do not increase quadratically since frequently there exist less shortest paths between landmarks with lower degree than with higher degree. The labelling size of twitter increases more significant as there exist many shortest paths between landmarks. %The labelling size of ClueWeb09 is larger than the ordinary graph when $|R|=\{40,60,80,100\}$ since the average degree is small. However, for 10 datasets, the labelling sizes are smaller than the sizes of the ordinary graphs.

\vspace{-0.1cm}
\subsubsection{Query Time}
\rv{The impact of varying landmarks on query time is shown in Figure \ref{fig:query-time}.} When the number of landmarks increases, \rv{there are generally three cases: 1) the query times increase, e.g., Douban, DBLP and Orkut; 2) the query times decrease, e.g., 
WikiTalk, Twitter and ClueWeb09; 3) the query times have no significant changes, e.g., LiveJournal and uk2007. If a graph has very high degree vertices, selecting more landmarks often decreases query times because removing more landmarks can further sparsify the graph significantly.} For example, in Twitter, 38 million edges are incident to 20 landmarks, while 100 landmarks have around 123 million edges; accordingly, the query time under 100 landmarks is half as the query time under 20 landmarks. %For such graphs, we suggest that the number of landmarks is relatively large, i.e., $|R|\geq 100$. 
\rv{If degrees of vertices in a graph are evenly distributed such as Orkut, more landmarks do not necessarily improve query time; instead, due to increased computational cost for computing a sketch, query time often increases.} %For example, Orkut has vertices whose degrees are relatively evenly distributed, \q{and adding more landmarks leads to a significant decrease in query time.} %For these graphs, a small number of landmarks is preferred, e.g., $|R|=20$.

%\emdskip
%\begin{remark}
\vspace*{-0.1cm}
\subsection{Remarks}
\rv{In general, QbS has three sources of efficiency gains when answering shortest-path-graph queries: (1) QbS enables queries to traverse on a graph whose parts with high centrality are sparsified. Thus, although removing a small number of landmarks alone does not significantly reduce the number of edges in a whole graph (e.g., 3.2\% of edges are removed with 20 landmarks in Twitter), the number of edges traversed by queries is significantly reduced (e.g., around 30\% less of edges being traversed by queries in QbS against Bi-BFS). %For example, 3.2\% of edges are removed with 20 landmarks in Twitter; \q{however, the number of edges queries traverse reduces over 30\% on average}.
(2) QbS uses a sketch to guide the search for each query, further reducing the number of edges being traversed. Take Twitter for example, after adding the guide of sketches on a sparsified graph, 66\% less of edges are traversed in QbS against Bi-BFS. (3) QbS can avoid the computation of shortest paths between high-degree landmarks when two or more landmarks appear on one shortest path, since these shortest paths can be precomputed as discussed in Section \ref{subsec:complexity}. In our experiments, the performance of QbS varies in datasets, depending on how the characteristics of datasets support these sources of gains to speed up query efficiency. } 

\vspace{-0cm}\section{Related Work}
\label{sec:related-work}
%In this section, we discuss related work on finding shortest paths between two vertices on complex networks. 

\noindent\textbf{Exact algorithms.}
One of the most classical methods for shortest path computation is Dijkstra's algorithm \cite{dijkstra1959note}. It computes a single-source shortest path tree on a weighted graph in time complexity $O(|E|+|V|log|V|)$. For unweighted graphs, breadth-first search (BFS) computes a single-source shortest path tree in $O(|E|)$. However, these methods are very inefficient on large networks. A simple strategy for reducing search space is to employ bi-directional BFS which performs two searches from two given vertices, respectively, based on certain heuristic assumptions \cite{goldberg2005computing, jin2013hub}. %\ye{Also, many methods are proposed for parallel and distributed BFS under specific models, e.g., Buluc et al. \cite{bulucc2011parallel} proposed parallel BFS on distributed memory and Ueno et al. proposed hybrid BFS for supercomputers \cite{ueno2017efficient}.}
%\noindent \textbf{Labelling-based exact shortest path.} 
To further accelerate shortest path computation, a number of methods have been proposed to pre-compute a labelling so as to answer point-to-point shortest path queries online in a shorter time \cite{goldberg2005computing,goldberg2006reach,bast2007transit,goldberg2007point,wagner2007speed,abraham2010highway,wu2012shortest,sankaranarayanan2009path,sanders2005highway, xiao2009efficiently,wei2010tedi}. %However, these methods targeted at finding only one shortest path between two vertices, and were primarily designed for road networks with specific properties, such as weighted edges, low vertex degree, hierarchical structures, and near planarity \cite{goldberg2005computing,goldberg2006reach,bast2007transit,goldberg2007point,wagner2007speed,abraham2010highway,wu2012shortest,sankaranarayanan2009path,sanders2005highway}. Unfortunately, complex networks are known to have different properties from road networks, such as small-world properties, which has thus led to a line of research on point-to-point shortest path and distance queries for complex networks  \cite{xiao2009efficiently,wei2010tedi}. 
For example, Xiao et al. \cite{xiao2009efficiently} exploited graph symmetry to label shortest paths. %for answering point-to-point shortest path queries. 
Though the size of labels has been compressed depending on the symmetric property, the space cost is still high. Later, Wei \cite{wei2010tedi} introduced a method based on tree decomposition for point-to-point shortest path queries. However, most of complex networks have a large component in which vertices are densely connected, making it hard to be decomposed into tree-like structures. 
Several methods have been proposed for finding shortest path distances on complex networks (e.g., \cite{akiba2013fast, fu2013label, akiba2012exploit, hayashi2016fully, farhan2019highly}). Some of them considered answering point-to-point shortest path queries as an extension of answering distance queries, although they did not provide any experiments. For example, Akiba et al. \cite{akiba2013fast} proposed pruned landmark labelling (PLL) which constructs a 2-hop labelling for distance queries by conducting pruned BFSs. Fu et al. \cite{fu2013label} proposed IS-label, a labelling for distance queries on weighted graphs based on an independent set of vertices. Both of these methods discussed labellings for point-to-point shortest path queries by extending labellings for distance queries with parent information, which however require a high space overhead and do not scale to large graphs. In this work, we study the \rv{shortest-path-graph} problem, which is computationally more difficult than the point-to-point shortest path problem, and little attention has previously been given. Our method pre-computes a small-sized distance labelling and can handle complex networks with up to billions of vertices.

\vspace{0.1cm}
\noindent \textbf{Approximate algorithms.} Due to the high computational costs of computing shortest paths, a number of approximate methods for point-to-point shortest path queries have been proposed in the past, including landmark-based methods with acceptable accuracy \cite{gubichev2010fast,zhao2011efficient,tretyakov2011fast}. Specifically, Gubichev et al. \cite{gubichev2010fast} proposed to pre-compute shortest paths from each vertex to each landmark, and then concatenate shortest paths from two vertices to the same landmarks to approximate shortest paths. They also proposed cycle elimination and tree-based sketch to boost  accuracy. 
Zhao et al. \cite{zhao2011efficient} proposed a method, called Rigel, to estimate shortest path distances. They also extended Rigel for approximating shortest paths. 
Tretyakov et al. \cite{tretyakov2011fast} used shortest path trees rooted at landmarks to approximate shortest path distances and search for one shortest path. %They also proposed to find the lowest common ancestor, locate shortcuts in one shortest path tree and combine the search in two shortest path trees. 
Unlike these approximate algorithms, our work here aims to develop an exact method \rv{to accurately compute a shortest path graph that contains all shortest paths between two given vertices.}

\section{Conclusions}\label{sec:conclusion}
We have proposed a novel method QbS to answer shortest-path-graph queries on large graphs. QbS constructs a labelling scheme through pre-computation, and then answers queries by performing online computation that involves fast sketching and guided searching. We have analyzed the complexity and correctness of our method. Our labelling scheme is deterministic and can be constructed through a parallelized process. We have conducted experiments on 12 large real-world graphs to empirically verify the scalability and efficiency of QbS. 
For future work, we plan to extend QbS on road networks by leveraging their specific properties and study landmark selection strategies to improve the performance.

\clearpage
\bibliographystyle{ACM-Reference-Format}
\balance
\bibliography{ref}

\clearpage
\appendix

\end{document}